\documentclass[
  reprint,
  aps,
  prx,
  amsmath,
  amssymb,
  nofootinbib,
  superscriptaddress
]{revtex4-2}

\usepackage[T1]{fontenc}
\usepackage[utf8]{inputenc}
\usepackage{lmodern}
\usepackage{microtype}
\usepackage{mathtools}
\usepackage{bm}
\usepackage{booktabs}
\usepackage{graphicx}
\usepackage{xcolor}
\usepackage{hyperref}
\usepackage{enumitem}
\usepackage{amsthm}

\hypersetup{
  colorlinks=true,
  linkcolor=blue!45!black,
  citecolor=green!35!black,
  urlcolor=blue!55!black,
  pdftitle={Predictive Holonomy and Minimal Completion of Coarse Dynamics},
  pdfauthor={Maxim V. Churilov},
  pdfsubject={Coarse-grained dynamics, memory kernels, minimal realization, predictive state},
  pdfkeywords={coarse-graining, Mori-Zwanzig, Hankel rank, predictive state, minimal realization}
}

\newtheorem{theorem}{Theorem}
\newtheorem{proposition}[theorem]{Proposition}
\newtheorem{lemma}[theorem]{Lemma}
\newtheorem{corollary}[theorem]{Corollary}
\newtheorem{principle}[theorem]{Principle}
\theoremstyle{definition}
\newtheorem{definition}[theorem]{Definition}
\theoremstyle{remark}
\newtheorem{remark}[theorem]{Remark}

\newcommand{\R}{\mathbb{R}}

\newcommand{\one}{\mathbf{1}}
\newcommand{\rank}{\operatorname{rank}}
\newcommand{\im}{\operatorname{im}}
\newcommand{\kerop}{\operatorname{ker}}

\newcommand{\KL}{D_{\mathrm{KL}}}

\newcommand{\cM}{\mathcal{M}}
\newcommand{\cH}{\mathcal{H}}
\newcommand{\cS}{\mathcal{S}}

\newcommand{\cA}{\mathcal{A}}
\newcommand{\dd}{\,\mathrm{d}}
\newcommand{\e}{\mathrm{e}}
\newcommand{\norm}[1]{\left\lVert #1\right\rVert}

\begin{document}

\title{Predictive Holonomy and Minimal Completion of Coarse Dynamics}

\author{Maxim V. Churilov}
\email{churilovm1305@gmail.com}
\affiliation{Independent Researcher, Orenburg, Russia}

\date{July 13, 2026}

\begin{abstract}
A coarse description of a dynamical system generally fails to evolve autonomously: evolving and coarse-graining in different orders gives different predictions, and the compressed propagator fails the semigroup law.  We show that this path dependence is not merely an approximation error; it is the observable boundary of missing state.

For a finite continuous-time Markov chain with generator \(L\), coarse channel \(C\), and stochastic section \(R\) satisfying \(CR=I\), define \(P=RC\), \(Q=I-P\), and \(K=CLR\).  The exact coarse propagator \(T_t=C\e^{tL}R\) obeys a Volterra equation with memory kernel
\[
\cM(t)=CLQ\,\e^{tQLQ}\,QLR.
\]
First, we prove that every memory jet \(\cM^{(k)}(0)\) is recursively recoverable from the short-time derivatives of \(T_t\) alone.  Second, we form the block Hankel operator
\[
\mathbb H_\infty=\bigl(\cM^{(i+j)}(0)\bigr)_{i,j\ge0}
\]
and prove that its rank is exactly the smallest number of auxiliary linear state variables required to turn the coarse dynamics into an exact autonomous, time-homogeneous system.  A minimal completion is constructible by a Ho--Kalman factorization and is unique up to similarity.  Thus memory has a measurable state dimension.

The first nonzero memory jet also fixes the leading information-geometric failure of closure.  If \(d=\min\{k:\cM^{(k)}(0)\neq0\}\), then every smooth classical \(f\)-divergence generically scales as \(t^{2d+4}\).  We give exact Markov examples with exponents \(4,6,8\) and closure ranks \(1,2,3\), respectively, and verify the coefficients symbolically and at high precision.  An explicit entrywise-sampling bound turns the singular-value criterion into a finite-confidence certificate for hidden predictive coordinates.  Finally, we place the linear theorem inside the canonical nonlinear construction: histories are identified when they induce the same future law under every admissible intervention, yielding the minimal counterfactual predictive state.  The result converts coarse-graining from a choice of variables into a constructive state-completion problem and sharply separates ordinary linear realization from the harder requirement of a positive or Markov realization.
A matrix-Bernstein certificate recovers the causal rank directly from
structured samples with intrinsic-variance control and no entrywise
dimension loss.
\end{abstract}

\maketitle

\section{The missing-state problem}

The usual question of coarse-graining is which variables should be retained.  The more structural question is what must be added when the retained variables do not close.  Let a fine system be propagated and then compressed, or let its compressed description be propagated by an effective law.  If the two routes disagree, the discrepancy is often treated as model error.  We instead regard it as a boundary measurement of omitted state.

This viewpoint is forced by three classical bodies of work.  Markov lumpability characterizes exact autonomous aggregation \cite{KemenySnell1960,Buchholz1994,Kabernik2018}; projection methods turn eliminated variables into memory kernels \cite{Nakajima1958,Zwanzig1961,Mori1965,TeVrugt2020}; and realization theory identifies the smallest state-space model compatible with an input--output response \cite{Kalman1960,HoKalman1966,Kailath1980,Antoulas2005}.  Predictive-state and computational-mechanics constructions give an operational alternative to latent ontology: state is an equivalence class of histories with identical future consequences \cite{ShaliziCrutchfield2001,LittmanSuttonSingh2001,SinghJamesRudary2004,BarnettCrutchfield2015,ThonJaeger2015}.

The missing connection is exact and constructive.  A projected dynamics has a Mori--Zwanzig memory kernel, but the kernel is usually derived from a known microscopic generator.  A realization theorem reconstructs hidden state from a response sequence, but it is not normally identified with the state omitted by a specified coarse-graining.  Here these structures are joined:
\begin{equation}
\begin{aligned}
\text{coarse propagator}
&\longrightarrow \text{memory jets}
\longrightarrow \text{Hankel rank}\\
&\longrightarrow \text{minimal state completion}.
\end{aligned}
\label{eq:program}
\end{equation}
The first arrow is a recursive tomography theorem.  The second defines a closure invariant.  The third is a minimal-realization theorem.  Together they answer a question left open by a curvature-only diagnosis: not only whether a coarse law fails, but precisely how much hidden linear state is necessary and sufficient to repair it.

The paper has five principal results.

First, the exact compressed propagator
\[
T_t=C\e^{tL}R
\]
obeys a closed Volterra equation whose kernel factors into departure from the represented manifold, hidden propagation, and return to the observable derivative.

Second, all derivatives of that kernel are determined by derivatives of \(T_t\) at the origin.  No observation of the microscopic state is required once reset-conditioned coarse propagators can be measured.

Third, the infinite block Hankel rank of the memory jets equals the minimum dimension of any exact time-homogeneous linear auxiliary state.  A minimal completion is unique up to similarity, so the hidden coordinates are conventional while their dimension and response are invariant.

Fourth, the first nonzero memory jet determines the exponent and coefficient of short-time predictive divergence.  Memory depth and closure rank are independent invariants: the former measures when hidden state first returns; the latter measures how many independent hidden predictive modes exist.

Fifth, the linear theorem is the tangent-space sector of a more general operational statement.  The canonical exact state of a controlled stochastic process is the quotient of histories by equality of all intervention-conditioned future laws.  Every exact counterfactual representation factors through this quotient.

The requirement that a completion itself be probabilistic introduces a genuine frontier.  A minimal ordinary realization need not admit positive coordinates of the same dimension; positive realization can require more states and remains difficult in general \cite{FarinaRinaldi2000,BenvenutiFarina2004,TaghavianSjolund2025}.  We therefore distinguish three quantities: predictive-state dimension, linear closure rank, and positive Markov completion dimension.  Conflating them obscures the precise content of emergence.

\section{Operational state and predictive holonomy}

\subsection{Counterfactual predictive equivalence}

Consider a discrete-time controlled stochastic process with finite action alphabet \(\cA\) and observation alphabet \(\mathcal O\).  A history
\[
h_t=(o_0,a_0,o_1,\ldots,a_{t-1},o_t)
\]
is admissible if it has positive probability under at least one admissible policy.  A future test is a finite action string \(\bm a=a_t\cdots a_{t+\ell-1}\) together with an observation string \(\bm o=o_{t+1}\cdots o_{t+\ell}\).

\begin{definition}[Counterfactual predictive equivalence]
Two admissible histories \(h\) and \(h'\) are equivalent, written \(h\sim h'\), when
\begin{equation}
\Pr(\bm O=\bm o\mid h,\operatorname{do}(\bm A=\bm a))
=
\Pr(\bm O=\bm o\mid h',\operatorname{do}(\bm A=\bm a))
\label{eq:counterfactual-equivalence}
\end{equation}
for every finite future action string \(\bm a\) and every compatible future observation string \(\bm o\).
\end{definition}

The quotient \(\cS_\epsilon=\cH/\!\sim\) is the counterfactual predictive state space.  It extends causal states of autonomous processes \cite{ShaliziCrutchfield2001} and input--output causal states \cite{BarnettCrutchfield2015}; in predictive-state language, each quotient element is the complete vector of test predictions \cite{LittmanSuttonSingh2001,SinghJamesRudary2004}.

\begin{theorem}[Canonical predictive state]
\label{thm:canonical-state}
The quotient \(\cS_\epsilon\) has the following properties.

\begin{enumerate}[label=(\roman*)]
\item It is predictively sufficient for every finite intervention-conditioned future.
\item Its update is recursive: whenever \(hao\) is admissible, the map
\[
([h],a,o)\longmapsto[hao]
\]
is well defined.
\item It is minimal among exact counterfactual state representations.  If \(\phi(h)\) determines all probabilities in Eq.~\eqref{eq:counterfactual-equivalence}, then there exists a map \(g\) such that
\[
[h]=g(\phi(h)).
\]
\end{enumerate}
\end{theorem}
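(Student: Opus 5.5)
The three claims all follow from the definition of \(\sim\) as a kernel of the map \(\Phi:h\mapsto\bigl(\Pr(\bm O=\bm o\mid h,\operatorname{do}(\bm A=\bm a))\bigr)_{(\bm a,\bm o)}\), which sends each admissible history to its complete vector of test predictions. The plan is to identify \(\cS_\epsilon\) with the image of \(\Phi\), and to read off (i) and (iii) as universal properties of a kernel quotient. Claim (ii) carries the only real content.

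For (i), every finite intervention-conditioned future probability is by definition a coordinate of \(\Phi(h)\). Since \(\Phi\) is constant on classes, the prediction \(\Pr(\bm O=\bm o\mid h,\operatorname{do}(\bm A=\bm a))\) factors as a function of \([h]\). For (iii), suppose \(\phi(h)\) determines all these probabilities, i.e.\ \(\Phi=\Psi\circ\phi\) for some \(\Psi\). Then \(\phi(h)=\phi(h')\) implies \(\Phi(h)=\Phi(h')\), hence \(h\sim h'\). We may therefore define \(g(\phi(h)):=[h]\); this is well defined on the image of \(\phi\) and can be extended arbitrarily elsewhere. I will also note that \(g\) is surjective, so \(\cS_\epsilon\) is the coarsest exact representation.

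For (ii), take \(h\sim h'\) with both \(hao\) and \(h'ao\) admissible. I must show \(hao\sim h'ao\); the case where only one of the two is admissible needs a word, handled at the end. Fix a test \((\bm a',\bm o')\) of length \(\ell\). The chain rule under intervention, where \(\operatorname{do}\) on actions removes any policy dependence, gives
\[
\Pr(o\,\bm o'\mid h,\operatorname{do}(a\,\bm a'))=\Pr(o\mid h,\operatorname{do}(a))\,\Pr(\bm o'\mid hao,\operatorname{do}(\bm a')).
\]
The left side and the first factor on the right are both test probabilities of \(h\), of lengths \(\ell+1\) and \(1\). They agree with the corresponding quantities for \(h'\). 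Provided \(\Pr(o\mid h,\operatorname{do}(a))>0\), dividing gives equality of the conditional future laws of \(hao\) and \(h'ao\). Positivity follows from equivalence: \(\Pr(o\mid h,\operatorname{do}(a))=\Pr(o\mid h',\operatorname{do}(a))\). This also shows that admissibility of \(hao\) transfers to \(h'ao\), at least once admissibility is read as positive probability of reaching \(o\) after \(a\) under some policy that plays \(a\).

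The main obstacle is making this last point rigorous. Admissibility is defined through policies, whereas the equivalence is defined through interventions. I would first establish a lemma: for admissible \(h\), \(hao\) is admissible if and only if \(\Pr(o\mid h,\operatorname{do}(a))>0\). Here I would use a policy reaching \(h\) followed by the deterministic choice \(a\), together with the fact that the process's observation kernel, given a history and action, does not depend on the policy. With this lemma the map \(([h],a,o)\mapsto[hao]\) is defined exactly on admissible extensions and is independent of the representative. It is explicitly given by the Bayesian update of the prediction vector, \(\Phi(hao)_{(\bm a',\bm o')}=\Phi(h)_{(a\bm a',o\bm o')}/\Phi(h)_{(a,o)}\).
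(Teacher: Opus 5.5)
Your proposal is correct and is essentially the paper's argument: (i) is definitional, (ii) writes the future law after \(hao\) as a ratio of joint test probabilities after \(h\), and (iii) shows that each level set of \(\phi\) lies inside one equivalence class. Your explicit positivity check \(\Pr(o\mid h,\operatorname{do}(a))>0\) and your lemma that admissibility of \(hao\) depends only on \([h]\) are refinements the paper skips by assuming both extensions are admissible, which is already enough for well-definedness.
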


\begin{proof}
Sufficiency is the definition of the equivalence classes.  For recursive update, let \(h\sim h'\) and suppose both extensions by \(a,o\) are admissible.  Conditional probabilities of every later test after \(hao\) are ratios of joint test probabilities after \(h\); equality of all joint probabilities for \(h\) and \(h'\) gives equality after conditioning.  Hence \(hao\sim h'ao\).  For minimality, if \(\phi(h)=\phi(h')\) while \(h\not\sim h'\), then some intervention-conditioned future probability differs, contradicting exactness of \(\phi\).  Therefore each level set of \(\phi\) lies within one predictive-equivalence class, and the quotient label is a function of \(\phi\).
\end{proof}

Theorem~\ref{thm:canonical-state} is an operational closure theorem: exact state is not postulated as a hidden object but derived as the coarsest distinction preserving all possible future tests.  It is global and generally nonlinear.  The remaining sections analyze the finite-dimensional linear completion visible near a chosen coarse description.

\subsection{Holonomy as path dependence}

Let \(\Phi_\gamma\) and \(\Phi_{\gamma'}\) be two composite channels with the same operational endpoints.  For a divergence \(D\), define their predictive holonomy at state \(\rho\) by
\begin{equation}
\mathfrak h_D(\gamma,\gamma';\rho)
=
D\!\left(\Phi_\gamma(\rho)\,\middle\|\,\Phi_{\gamma'}(\rho)\right).
\label{eq:holonomy}
\end{equation}
The elementary time--scale cell compares
\[
\text{fine evolution}\ \to\ \text{coarse readout}
\]
with
\[
\text{coarse preparation}\ \to\ \text{effective evolution}.
\]
Vanishing holonomy means path-independent prediction.  Nonzero holonomy means that the coarse state fails to contain some information relevant to its claimed future law.

Equation~\eqref{eq:holonomy} is a scalar operational measure.  The more complete object is the channel defect itself.  For the Markov setting below, its finite-time composition law factorizes exactly through the unresolved subspace.  This factorization is the bridge from geometric language to state reconstruction.

\section{Exact coarse dynamics}

\subsection{Encoder, section, and projected generator}

Let
\[
\Delta_n=\{p\in\R^n:p_i\ge0,\ \one^\mathsf Tp=1\}
\]
and use the column convention
\[
\dot p(t)=Lp(t),
\]
where \(L_{ij}\ge0\) for \(i\neq j\) and \(\one^\mathsf TL=0\).  Let
\[
C:\Delta_n\to\Delta_m,\qquad
R:\Delta_m\to\Delta_n
\]
be stochastic matrices satisfying
\begin{equation}
CR=I_m.
\label{eq:section}
\end{equation}
The map \(C\) is the coarse channel; \(R\) specifies how a macrostate is physically prepared inside its microscopic fibre.  Define
\begin{equation}
P=RC,\qquad Q=I_n-P,\qquad K=CLR.
\label{eq:PQK}
\end{equation}
Then \(P^2=P\), \(Q^2=Q\), \(CQ=0\), and \(QR=0\).

\begin{proposition}
The matrix \(K=CLR\) is a valid continuous-time Markov generator on \(\Delta_m\).
\end{proposition}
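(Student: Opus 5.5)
We verify the two defining properties of a continuous-time Markov generator in the column convention: zero column sums, \(\one^\mathsf T K=0\), and nonnegative off-diagonal entries, \(K_{ab}\ge0\) for \(a\neq b\).

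\textbf{Column sums.} Since \(C\) is column-stochastic, \(\one_m^\mathsf T C=\one_n^\mathsf T\). Therefore
\[
\one_m^\mathsf T K=\one_m^\mathsf T CLR=\one_n^\mathsf T LR=0,
\]
because \(\one_n^\mathsf T L=0\).

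\textbf{Off-diagonal signs.} A direct entrywise expansion \(K_{ab}=\sum_{i,j}C_{ai}L_{ij}R_{jb}\) leaves the diagonal terms \(L_{ii}<0\) with uncontrolled sign, so I would argue through the semigroup instead. For \(t\ge0\), \(\e^{tL}\) is a stochastic matrix, hence so is \(T_t=C\e^{tL}R\), being a product of stochastic matrices. By Eq.~\eqref{eq:section}, \(T_0=CR=I_m\). Thus, for \(a\neq b\),
\[
K_{ab}=\frac{\dd}{\dd t}(T_t)_{ab}\Big|_{t=0}=\lim_{t\downarrow0}\frac{(T_t)_{ab}-0}{t}\ge0,
\]
since \((T_t)_{ab}\ge0\). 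This works because \(T_0=I_m\) exactly, so every off-diagonal entry starts at zero and cannot decrease.

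\textbf{Main obstacle.} The only delicate point is the sign argument, and it rests entirely on the section identity \(CR=I_m\). Without it, \(T_0\) would not be the identity, and the derivative at \(t=0\) would give no sign information. As a side remark, the identity \(CR=I\) with \(C,R\) stochastic forces a specific structure: the fibres \(\operatorname{supp}(R e_b)\) are disjoint and \(C\) reads them out exactly. This gives an alternative, purely entrywise proof. Take \(a\neq b\), and let \(j\) range over the support of \(Re_b\), where \(C_{bj}=1\) and hence \(C_{aj}=0\). Then every term \(C_{ai}L_{ij}R_{jb}\) with \(i=j\) vanishes, leaving only off-diagonal, nonnegative terms. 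I would include the semigroup argument as the main proof and mention this entrywise argument as a check.
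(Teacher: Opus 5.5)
Your proof is correct. The column-sum step coincides with the paper's, but your main sign argument takes a genuinely different route.

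The paper works entrywise in one line. For $\alpha\neq\beta$, $0=(CR)_{\alpha\beta}=\sum_i C_{\alpha i}R_{i\beta}$ is a sum of nonnegative terms, so $C_{\alpha i}R_{i\beta}=0$ for every $i$. Hence every diagonal contribution $C_{\alpha i}L_{ii}R_{i\beta}$ to $K_{\alpha\beta}$ vanishes, and the remaining terms are nonnegative. Your side ``check'' reaches the same conclusion, but through a detour: you use $(CR)_{bb}=1$ and column-stochasticity to derive the disjoint-fibre structure. The off-diagonal zeros of $CR$ together with nonnegativity already suffice, so the diagonal terms were never really of ``uncontrolled sign''. (Also, $L_{ii}\le0$, not necessarily $<0$.)

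Your semigroup argument uses only three facts: $T_t=C\e^{tL}R$ is entrywise nonnegative, $T_0=I_m$, and $\dot T_0=K$. This buys generality and interpretation. It applies verbatim to any positivity-preserving semigroup compressed by positive maps with $CR=I$. It also ties the proposition directly to the paper's remark that $K$ is the first-order match to the exact coarse response. Its cost is that it relies on the standard fact $\e^{tL}\ge0$ for $t\ge0$, for example via $\e^{tL}=\e^{-ct}\e^{t(L+cI)}$ with $L+cI$ entrywise nonnegative; you should state or cite this. The paper's route, by contrast, is purely algebraic and self-contained, and it shows explicitly why the diagonal of $L$ drops out.
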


\begin{proof}
Column conservation follows from stochasticity:
\[
\one_m^\mathsf TK=\one_n^\mathsf TLR=0.
\]
For \(\alpha\neq\beta\), the identity
\[
0=(CR)_{\alpha\beta}
=\sum_i C_{\alpha i}R_{i\beta}
\]
and nonnegativity imply \(C_{\alpha i}R_{i\beta}=0\) termwise.  Therefore diagonal entries of \(L\) make no contribution to \(K_{\alpha\beta}\), while all off-diagonal contributions are nonnegative.
\end{proof}

The reset-conditioned coarse propagator is
\begin{equation}
T_t=C\e^{tL}R.
\label{eq:Tt}
\end{equation}
It satisfies \(T_0=I_m\) and \(\dot T_0=K\).  Thus \(K\) is the unique linear generator matching the exact coarse response to first order for all macro preparations.

\subsection{Exact semigroup defect}

\begin{theorem}[Hidden-excursion factorization]
\label{thm:defect}
For all \(s,t\ge0\),
\begin{equation}
T_{s+t}-T_sT_t
=
C\e^{sL}Q\e^{tL}R.
\label{eq:defect}
\end{equation}
\end{theorem}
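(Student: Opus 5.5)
The identity follows from the semigroup property of the fine propagator together with the resolution of identity \(I_n=P+Q\), where \(P=RC\) and \(Q=I_n-P\) are as in Eq.~\eqref{eq:PQK}. No analytic estimates are needed; the argument is pure matrix algebra, valid for all \(s,t\ge0\) because \(\e^{(s+t)L}=\e^{sL}\e^{tL}\) holds exactly for the matrix exponential. This is why the statement is an exact finite-time factorization rather than a short-time expansion.

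The steps, in order, are as follows. First, write
\[
T_{s+t}=C\e^{(s+t)L}R=C\e^{sL}\e^{tL}R
\]
using the group property of \(\e^{tL}\). Second, insert \(I_n=P+Q\) between the two exponentials:
\[
T_{s+t}=C\e^{sL}(P+Q)\e^{tL}R .
\]
Third, evaluate the \(P\)-term by substituting \(P=RC\):
\[
C\e^{sL}RC\e^{tL}R=(C\e^{sL}R)(C\e^{tL}R)=T_sT_t ,
\]
by the definition \eqref{eq:Tt}. Subtracting this term leaves exactly the \(Q\)-term \(C\e^{sL}Q\e^{tL}R\), which is Eq.~\eqref{eq:defect}.

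There is no real obstacle here; the only point needing care is that \(P=RC\) is a genuine projection. That holds by the section condition \eqref{eq:section}, since \(P^2=RCRC=RC\). This is what licenses interpreting the \(P\)-insertion as ``coarse readout followed by re-preparation.''

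As a sanity check, the defect vanishes when \(s=0\) or \(t=0\), because \(CQ=0\) and \(QR=0\). Its leading term is \(st\,CLQLR\), obtained by expanding both exponentials to first order. This term matches the order-zero memory jet \(\cM(0)=CLQ\,QLR\), since \(Q^2=Q\). That consistency is worth noting because it links Theorem~\ref{thm:defect} to the Volterra kernel stated in the abstract.
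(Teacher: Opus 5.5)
Your proposal is correct and is essentially the paper's proof: use \(\e^{(s+t)L}=\e^{sL}\e^{tL}\), insert \(I=P+Q\) between the two fine propagators, identify the \(P\)-term as \(T_sT_t\) via \(P=RC\), and the \(Q\)-term is what remains. Your sanity check is also right: the defect's leading mixed term \(st\,CLQLR=st\,M_0\) agrees with the paper's appendix relation \(\partial_s\partial_t(T_{s+t}-T_sT_t)|_{s=t=0}=M_0\).
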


\begin{proof}
Insert \(I=P+Q\) between the two fine propagators:
\[
T_{s+t}=C\e^{sL}(P+Q)\e^{tL}R.
\]
The \(P\) term is
\[
C\e^{sL}RC\e^{tL}R=T_sT_t,
\]
leaving Eq.~\eqref{eq:defect}.
\end{proof}

The right-hand side has a literal dynamical meaning: evolve, retain only the component not represented by the coarse state, evolve that hidden component, and read out its later effect.  The failure of Chapman--Kolmogorov composition is therefore an observable hidden round trip.

\subsection{Memory equation}

Write the fine state as
\[
p(t)=Rq(t)+\eta(t),
\qquad
q(t)=Cp(t),\quad \eta(t)=Qp(t).
\]
For reset preparation \(p(0)=Rq_0\), one has \(\eta(0)=0\).  The coupled equations are
\begin{align}
\dot q(t)&=Kq(t)+CLQ\,\eta(t),\\
\dot\eta(t)&=QLR\,q(t)+QLQ\,\eta(t).
\end{align}
Define
\begin{equation}
A=CLQ,\qquad H=QLQ,\qquad B=QLR.
\label{eq:AHB}
\end{equation}

\begin{theorem}[Exact memory representation]
\label{thm:memory}
The coarse propagator satisfies
\begin{equation}
\dot T_t
=
KT_t+\int_0^t \cM(\tau)T_{t-\tau}\,\dd\tau,
\qquad T_0=I_m,
\label{eq:volterra}
\end{equation}
with
\begin{equation}
\cM(\tau)=A\e^{\tau H}B.
\label{eq:kernel}
\end{equation}
Its Laplace transform is
\begin{equation}
\widehat T(z)
=
\left[zI_m-K-A(zI-H)^{-1}B\right]^{-1}
\label{eq:resolvent}
\end{equation}
for sufficiently large \(\operatorname{Re}z\).
\end{theorem}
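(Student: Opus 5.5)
The plan is to work at the level of the matrix propagator rather than a single trajectory. Apply the coupled $(q,\eta)$ system to each column of the reset preparation. Set $q(t)=T_t$ and $\eta(t)=Q\e^{tL}R$, both $m$-column matrices. Since $p(t)=\e^{tL}R$ and $P+Q=I$, one has $p(t)=RCp(t)+Qp(t)=RT_t+\eta(t)$.

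Differentiating $q=Cp$ gives $\dot q = CLp = CLRT_t + CL\eta$. Because $Q^2=Q$, this rewrites as $CL\eta=CLQ\eta$. Similarly $\dot\eta = QL(RT_t+Q\eta)$, which is $BT_t+H\eta$. This verifies the two coupled equations with initial data $q(0)=CR=I_m$ and $\eta(0)=QR=0$.

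The hidden equation $\dot\eta=H\eta+BT_t$ is linear with constant coefficients and zero initial condition. Variation of constants (Duhamel) gives
\[
\eta(t)=\int_0^t \e^{(t-s)H}BT_s\,\dd s .
\]
One checks this by differentiating and noting that $\eta(0)=0$ fixes the solution uniquely. Substituting into $\dot T_t = KT_t + A\eta(t)$ and changing variables $\tau=t-s$ yields Eq.~\eqref{eq:volterra} with $\cM(\tau)=A\e^{\tau H}B$. Here $A=CLQ$ acts on $\eta$ directly because $\eta=Q\eta$.

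For the Laplace transform, all of $\e^{tL}$, $T_t$ and $\e^{tH}$ grow at most like $\e^{\omega t}$ with $\omega$ bounded by the norms of $L$ and $H$. So for $\operatorname{Re}z>\omega$ the transforms converge absolutely. There $\widehat{\cM}(z)=A(zI-H)^{-1}B$, and the convolution theorem applies. Transforming Eq.~\eqref{eq:volterra} with $T_0=I_m$ gives
\[
z\widehat T - I_m = K\widehat T + \widehat{\cM}(z)\widehat T .
\]
Hence $[zI_m-K-A(zI-H)^{-1}B]\widehat T(z)=I_m$.

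The bracket is invertible for large $\operatorname{Re}z$, since $K+\widehat{\cM}(z)$ stays bounded (indeed $\widehat{\cM}(z)\to0$) while $zI_m$ dominates. This yields Eq.~\eqref{eq:resolvent}.

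There is an alternative route that avoids ODEs. Write $\widehat T(z)=C(zI-L)^{-1}R$ and perform a Schur-complement or block inversion with respect to $P+Q$. This requires care because $R,C$ are not orthogonal embeddings, so I prefer the Duhamel route.

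The main obstacle is bookkeeping rather than analysis. One must ensure that $H=QLQ$ is used on $\im Q$ consistently, i.e.\ that $\e^{tH}$ restricted to $\im Q$ is the right hidden propagator. The point is that $\e^{tH}B=Q\e^{tH}B$, because $H$ and $B$ both have range in $\im Q$. This justifies the manipulations without ever restricting operators to subspaces.
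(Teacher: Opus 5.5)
Your proposal is correct and follows essentially the same route as the paper. You derive the coupled $(q,\eta)$ system, solve the hidden equation by variation of constants with $\eta(0)=QR=0$, substitute into the $q$-equation to get the Volterra equation, and Laplace-transform. Beyond the paper, you also check the coupled equations and the convergence and invertibility for large $\operatorname{Re}z$, which the paper leaves implicit.
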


\begin{proof}
Variation of constants gives
\[
\eta(t)=\int_0^t\e^{(t-s)H}Bq(s)\,\dd s.
\]
Substitution into the equation for \(q\) yields Eq.~\eqref{eq:volterra}.  Laplace transformation gives
\[
z\widehat T-I=K\widehat T+\widehat{\cM}\widehat T,
\qquad
\widehat{\cM}(z)=A(zI-H)^{-1}B,
\]
and hence Eq.~\eqref{eq:resolvent}.
\end{proof}

The kernel separates three mechanisms:
\[
\text{resolved}
\xrightarrow{B}
\text{hidden}
\xrightarrow{\e^{tH}}
\text{hidden}
\xrightarrow{A}
\text{resolved derivative}.
\]
Memory is therefore not an abstract correction.  It is the transfer function of the state omitted by the coarse representation.

\section{Holonomy tomography}

The central experimental question is whether the hidden transfer function can be recovered without microscopic observation.  It can.

Let
\begin{equation}
S_n=T^{(n)}(0)=CL^nR,
\qquad
M_n=\cM^{(n)}(0)=AH^nB.
\label{eq:SnMn}
\end{equation}
The \(S_n\) are short-time response tensors of the measurable reset-conditioned coarse propagator.  The \(M_n\) are the unknown hidden memory moments.

\begin{theorem}[Memory-jet tomography]
\label{thm:tomography}
The full sequence \(\{M_n\}_{n\ge0}\) is uniquely determined by \(\{S_n\}_{n\ge0}\).  Specifically,
\begin{equation}
S_{n+1}
=
KS_n+\sum_{j=0}^{n-1}M_jS_{n-1-j},
\qquad n\ge1,
\label{eq:recurrence}
\end{equation}
and therefore
\begin{equation}
M_{n-1}
=
S_{n+1}-KS_n
-\sum_{j=0}^{n-2}M_jS_{n-1-j}.
\label{eq:tomography}
\end{equation}
\end{theorem}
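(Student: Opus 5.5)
The plan is to differentiate the Volterra equation~\eqref{eq:volterra} of Theorem~\ref{thm:memory} $n$ times at $t=0$ and then read off the resulting triangular system. Before that, I would record that $T_t=C\e^{tL}R$ is entire in $t$, so $T^{(n)}(0)=CL^nR=S_n$. Likewise $\cM(\tau)=A\e^{\tau H}B$ is entire with $\cM^{(j)}(0)=AH^jB=M_j$. All differentiations under the integral are therefore legitimate, and no analytic obstacle arises.

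The key step is the derivative of the convolution $F(t)=\int_0^t\cM(\tau)T_{t-\tau}\dd\tau$. By the Leibniz rule, $F'(t)=\cM(t)T_0+\int_0^t\cM(\tau)\dot T_{t-\tau}\dd\tau$. Iterating this identity, or expanding both factors in their power series and using $\int_0^t\tau^a(t-\tau)^b\dd\tau=a!\,b!\,t^{a+b+1}/(a+b+1)!$, gives
\[
F^{(n)}(0)=\sum_{j=0}^{n-1}M_jS_{n-1-j},\qquad F(0)=0.
\]
The series route is the cleaner one, since it directly yields $F(t)=\sum_{N\ge0}\frac{t^{N+1}}{(N+1)!}\sum_{j=0}^{N}M_jS_{N-j}$. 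Differentiating~\eqref{eq:volterra} $n$ times and setting $t=0$ then gives $S_{n+1}=KS_n+F^{(n)}(0)$, which is~\eqref{eq:recurrence}. As a consistency check, the case $n=0$ gives $S_1=K$, in agreement with $\dot T_0=K$.

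As an alternative derivation that bypasses the Volterra equation, I would verify~\eqref{eq:recurrence} algebraically from $L=L(P+Q)$. Write $CL^{n+1}R=CL\,L^nR$ and split the leading $L$ as $CLP\,L^nR+CLQ\,L^nR=KS_n+A\,L^nR$. Then expand $QL^nR$ by repeatedly inserting $P+Q$, which gives $QL^nR=\sum_{j}H^jB\,S_{n-1-j}+H^nQR$. The last term vanishes because $QR=0$. This is a useful cross-check, and the two routes coincide.

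Uniqueness then follows by strong induction on $n$. In~\eqref{eq:recurrence}, the term $M_{n-1}S_0=M_{n-1}$ appears with coefficient $S_0=I_m$, and every other term involves only $M_j$ with $j\le n-2$. Solving for $M_{n-1}$ gives~\eqref{eq:tomography}, which expresses each $M_{n-1}$ through $S_0,\dots,S_{n+1}$ and earlier memory jets. Hence all $M_n$ are determined by the $S_n$, with $K=S_1$ itself among the data. The only step needing care is the index bookkeeping in the convolution derivative, specifically the boundary term $\cM^{(k)}(0)T_0$ arising at each differentiation. I expect this to be the main, though minor, obstacle, and the power-series computation settles it.
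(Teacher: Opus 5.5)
Your proposal is correct and follows the paper's own proof: expand \(T_t\) and \(\cM\) in power series, evaluate the convolution with the beta integral, match coefficients in the Volterra equation~\eqref{eq:volterra}, and solve for the newest moment using \(S_0=CR=I_m\). Your algebraic cross-check, inserting \(P+Q\) repeatedly and using \(QR=0\), is also valid, and it has the small advantage of not needing Theorem~\ref{thm:memory} at all.
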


\begin{proof}
Expand
\[
T_t=\sum_{n\ge0}\frac{S_n}{n!}t^n,
\qquad
\cM(t)=\sum_{j\ge0}\frac{M_j}{j!}t^j.
\]
The convolution term in Eq.~\eqref{eq:volterra} is
\begin{align}
\int_0^t\cM(\tau)T_{t-\tau}\,\dd\tau
&=
\sum_{j,k\ge0}
\frac{M_jS_k}{j!\,k!}
\int_0^t\tau^j(t-\tau)^k\,\dd\tau\\
&=
\sum_{j,k\ge0}
M_jS_k\,\frac{t^{j+k+1}}{(j+k+1)!},
\end{align}
because the beta integral cancels the factorials.  Matching the coefficient of \(t^n/n!\) in Eq.~\eqref{eq:volterra} gives Eq.~\eqref{eq:recurrence}; solving for the newest moment gives Eq.~\eqref{eq:tomography}.
\end{proof}

The first moments are
\begin{align}
M_0&=S_2-K^2,\\
M_1&=S_3-KS_2-M_0K,\\
M_2&=S_4-KS_3-M_0S_2-M_1K.
\end{align}
Thus the entire hidden response is encoded in the temporal jet of the observable coarse channel.

\begin{remark}
The tomography requires a preparation protocol \(R\): after each trial, the same macrostate must be reset to the same fine distribution.  Without controlled preparation, passive trajectories mix dynamical memory with uncertainty about the within-fibre state.
\end{remark}

\section{Minimal completion theorem}

\subsection{Closure rank}

Form the block Hankel matrices
\begin{equation}
\mathbb H_{p,q}
=
\begin{pmatrix}
M_0&M_1&\cdots&M_{q-1}\\
M_1&M_2&\cdots&M_q\\
\vdots&\vdots&\ddots&\vdots\\
M_{p-1}&M_p&\cdots&M_{p+q-2}
\end{pmatrix}
\label{eq:block-hankel}
\end{equation}
and the infinite Hankel operator
\[
\mathbb H_\infty=(M_{i+j})_{i,j\ge0}.
\]

\begin{definition}[Closure rank]
The linear closure rank is
\begin{equation}
r_{\mathrm{cl}}=\rank\mathbb H_\infty.
\label{eq:closure-rank}
\end{equation}
\end{definition}

Because \(M_k=AH^kB\), define the finite observability and controllability blocks
\begin{align}
\mathcal O_p&=
\begin{pmatrix}
A\\AH\\ \vdots\\AH^{p-1}
\end{pmatrix},&
\mathcal C_q&=
\begin{pmatrix}
B&HB&\cdots&H^{q-1}B
\end{pmatrix}.
\end{align}
Then
\begin{equation}
\mathbb H_{p,q}=\mathcal O_p\mathcal C_q.
\label{eq:hankel-factor}
\end{equation}

\subsection{Exact auxiliary completion}

A linear auxiliary completion of dimension \(r\) is a system
\begin{equation}
\frac{\dd}{\dd t}
\begin{pmatrix}q\\z\end{pmatrix}
=
\begin{pmatrix}
K&A_\star\\
B_\star&H_\star
\end{pmatrix}
\begin{pmatrix}q\\z\end{pmatrix},
\qquad
z(0)=0,
\label{eq:completion}
\end{equation}
whose \(q\)-component equals \(T_tq_0\) for every initial macrostate \(q_0\).

\begin{theorem}[Minimal completion]
\label{thm:minimal-completion}
Assume finite-dimensional fine dynamics as above.  Then:

\begin{enumerate}[label=(\roman*)]
\item An exact linear auxiliary completion exists.
\item The minimum auxiliary dimension is \(r_{\mathrm{cl}}\).
\item Every minimal completion realizes the same memory moments,
\[
M_k=A_\star H_\star^kB_\star,
\]
and any two controllable and observable minimal completions are related by an invertible hidden-state similarity.
\end{enumerate}
\end{theorem}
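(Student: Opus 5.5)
The plan is to reduce everything to classical realization theory for the sequence \(\{M_k\}\), via the equivalence between exact completions and realizations of the memory kernel.

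\textbf{Step 1: completions are realizations of the kernel.} Given any completion \eqref{eq:completion} with \(z(0)=0\), variation of constants gives \(z(t)=\int_0^t \e^{(t-s)H_\star}B_\star q(s)\dd s\). Hence \(q\) satisfies the Volterra equation \eqref{eq:volterra} with kernel \(\cM_\star(\tau)=A_\star\e^{\tau H_\star}B_\star\). The Volterra equation with given \(K\) and kernel has a unique continuous solution, since it is a linear Volterra equation of the second kind after integrating once. Therefore \(T_t\) solves it with \(\cM_\star\) if and only if \(\cM_\star=\cM\). For necessity, I will feed the matching \(q(t)=T_tq_0\) into the recurrence of Theorem~\ref{thm:tomography}. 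The same recurrence holds for the completion, with moments \(A_\star H_\star^kB_\star\) and the same \(S_n\). Its triangular inversion \eqref{eq:tomography} then forces \(A_\star H_\star^kB_\star=M_k\) for all \(k\). Analyticity of both kernels makes the moment identity equivalent to \(\cM_\star=\cM\). So exact completions of dimension \(r\) correspond exactly to dimension-\(r\) realizations \((A_\star,H_\star,B_\star)\) of \(\{M_k\}\). This already gives the first claim of (iii).

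\textbf{Step 2: existence.} Part (i) follows by taking \(z=\eta\) in \(\R^n\) with \((A,H,B)\) as in \eqref{eq:AHB}. Theorem~\ref{thm:memory} shows this is exact.

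\textbf{Step 3: minimality.} For the lower bound, any realization of dimension \(r\) factors \(\mathbb H_{p,q}=\mathcal O_p^\star\mathcal C_q^\star\) as in \eqref{eq:hankel-factor}. The inner dimension is \(r\), so \(\rank\mathbb H_{p,q}\le r\) for all \(p,q\), and hence \(r_{\mathrm{cl}}\le r\). In particular \(r_{\mathrm{cl}}\le n\) is finite. For achievability, I will use Cayley--Hamilton on \(H\). This shows \(\rank\mathbb H_\infty=\rank\mathbb H_{n,n}\) and that the infinite Hankel has finite rank stabilizing at block size \(n\). Then I apply Ho--Kalman: take a rank factorization \(\mathbb H_{n,n}=\mathcal O\,\mathcal C\) with inner dimension \(r_{\mathrm{cl}}\). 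Set \(A_\star\) to the first block row of \(\mathcal O\) and \(B_\star\) to the first block column of \(\mathcal C\). Set \(H_\star=\mathcal O^{+}\,\mathbb H^{\sigma}_{n,n}\,\mathcal C^{+}\), where \(\mathbb H^\sigma\) is the shifted Hankel \((M_{i+j+1})\). One then verifies \(A_\star H_\star^kB_\star=M_k\). By Step 1 this gives a completion of dimension \(r_{\mathrm{cl}}\), proving (ii).

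\textbf{Step 4: uniqueness up to similarity.} Take two minimal realizations, both controllable and observable, with \(\mathcal O\mathcal C=\mathcal O'\mathcal C'\) and full-rank factors. Define \(S=\mathcal O'^{+}\mathcal O\), which is invertible with inverse \(\mathcal C\,\mathcal C'^{+}\). Using the shifted Hankel identity \(\mathcal O H_\star\mathcal C=\mathcal O'H'_\star\mathcal C'\), one gets \(H'_\star=SH_\star S^{-1}\), \(B'_\star=SB_\star\), and \(A'_\star=A_\star S^{-1}\). I will also note that a minimal realization is automatically controllable and observable. Otherwise a Kalman decomposition would yield a smaller realization, contradicting the bound in Step 3.

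The main obstacle is the Ho--Kalman verification in Step 3, specifically that the shift relation \(\mathcal O_{n}H_\star=\mathcal O^{\uparrow}\) extends to all powers. This needs the rank of \(\mathbb H_{n,n}\) to equal that of the extended Hankel \(\mathbb H_{n+1,n+1}\), which is exactly where Cayley--Hamilton on the finite \(H\) is used. I will take that as the standard Ho--Kalman lemma \cite{HoKalman1966,Kailath1980}, checking only the rank-stabilization hypothesis.
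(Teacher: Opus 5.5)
Your proposal is correct. Steps 1--2 and the lower bound match the paper: the paper gets $M_k=A_\star H_\star^kB_\star$ from equality of the resolvents in Eq.~\eqref{eq:resolvent} rather than from the triangular recurrence \eqref{eq:recurrence}, but the two steps are equivalent. Achievability and uniqueness, however, go by a different route.

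\emph{Achievability.} The paper never truncates. It takes $\mathcal X$ to be the span of the infinite block columns $c_{j,u}=(M_{i+j}u)_{i\ge0}$, lets $H_\star$ be the shift $c_{j,u}\mapsto c_{j+1,u}$, sets $B_\star u=c_{0,u}$, and lets $A_\star$ read the zeroth block. Then $\dim\mathcal X=r_{\mathrm{cl}}$ by definition, $A_\star H_\star^kB_\star=M_k$ holds for all $k$ by inspection, and reachability and observability come for free. No Cayley--Hamilton or extension step is needed.

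Your finite Ho--Kalman construction, which the paper reserves for Corollary~\ref{cor:finite-tomography} and its appendix, is more constructive and shows that $M_0,\dots,M_{2n-1}$ already suffice. The cost is the extension to all $k$, and strictly it is not one-step rank stabilization that delivers it. For example, the scalar sequence $1,0,0,0,0,1,\dots$ has $\rank\mathbb H_{1,1}=\rank\mathbb H_{2,2}=1$, yet its one-dimensional Ho--Kalman model fails at $k=5$. What closes the argument is the order-$n$ Cayley--Hamilton recurrence $M_{k+n}=\sum_{i<n} a_i M_{k+i}$. Since $M_0,\dots,M_{2n-1}$ are already matched, this recurrence reads $\bigl(A_\star H_\star^n-\sum_i a_i A_\star H_\star^i\bigr)\mathcal C=0$, where $\mathcal C=(B_\star,\dots,H_\star^{n-1}B_\star)$ has full row rank. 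Hence $A_\star H_\star^kB_\star$ obeys the same recurrence and equals $M_k$ for every $k$.

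\emph{Uniqueness.} The paper builds the intertwiner on reachable states, $S(\sum_k H_1^kB_1u_k)=\sum_k H_2^kB_2u_k$, which is coordinate-free. Your $S=(\mathcal O')^{+}\mathcal O$ with inverse $\mathcal C(\mathcal C')^{+}$ is the matrix form of the same map and is equally valid. Your observation that minimality already forces controllability and observability (via the Kalman decomposition and the lower bound) slightly strengthens (iii): the similarity clause then covers every minimal completion.
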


\begin{proof}
Existence of some finite completion follows from the microscopic triple \((A,H,B)\), whose effective dimension is at most \(\rank Q\).

For the lower bound, suppose Eq.~\eqref{eq:completion} of dimension \(r\) reproduces \(T_t\).  Eliminating \(z\) gives
\[
\cM_\star(t)=A_\star\e^{tH_\star}B_\star.
\]
Equality of the resulting coarse resolvent with Eq.~\eqref{eq:resolvent} for all sufficiently large \(z\) implies equality of the memory transfer functions and hence
\[
M_k=A_\star H_\star^kB_\star
\quad (k\ge0).
\]
Every finite Hankel block then factors through an \(r\)-dimensional space:
\[
\mathbb H_{p,q}
=
\begin{pmatrix}
A_\star\\A_\star H_\star\\ \vdots
\end{pmatrix}
\begin{pmatrix}
B_\star&H_\star B_\star&\cdots
\end{pmatrix}.
\]
Thus \(r_{\mathrm{cl}}\le r\).

For the matching upper bound, construct the state space directly from the infinite Hankel operator.  For \(j\ge0\) and \(u\in\R^m\), let
\[
c_{j,u}:=(M_{i+j}u)_{i\ge0}
\]
be an infinite block column, and let
\[
\mathcal X=\operatorname{span}\{c_{j,u}:j\ge0,\ u\in\R^m\}.
\]
By definition, \(\dim\mathcal X=r_{\mathrm{cl}}\).  Define
\[
B_\star u=c_{0,u},\qquad
H_\star c_{j,u}=c_{j+1,u},
\]
and let \(A_\star:\mathcal X\to\R^m\) extract the zeroth block component.  The shift \(H_\star\) is well defined: if
\[
\sum_\alpha c_{j_\alpha,u_\alpha}=0,
\]
then for every \(i\ge0\),
\[
\sum_\alpha M_{i+j_\alpha}u_\alpha=0;
\]
replacing \(i\) by \(i+1\) proves
\[
\sum_\alpha c_{j_\alpha+1,u_\alpha}=0.
\]
By construction,
\[
A_\star H_\star^kB_\star u=M_ku
\]
for every \(k\) and \(u\).  Hence \((A_\star,H_\star,B_\star)\) realizes the full analytic memory kernel in dimension \(r_{\mathrm{cl}}\).  Substitution into Eq.~\eqref{eq:resolvent} shows that the augmented system reproduces \(T_t\) exactly.  The lower bound is therefore attained.

The constructed realization is reachable because \(\mathcal X\) is spanned by \(H_\star^kB_\star u\), and observable because a vector whose every block component \(A_\star H_\star^\ell x\) vanishes is the zero infinite column.  For two reachable and observable minimal realizations, mapping states generated by identical finite input sequences defines an invertible intertwiner \(S\), yielding
\[
H_2=SH_1S^{-1},\qquad
B_2=SB_1,\qquad
A_2=A_1S^{-1}.
\]
\end{proof}

Theorem~\ref{thm:minimal-completion} is the main closure result.  The omitted state is not unique as a coordinate system, but its minimum dimension and input--output action are unique.  Memory has become state.

\begin{corollary}[Finite tomography]
\label{cor:finite-tomography}
If \(r_{\mathrm{cl}}=r\), then some finite pair \(p,q\le r\) has stabilized Hankel rank \(r\), and the corresponding Hankel and shifted Hankel blocks determine a minimal completion.  In particular, the square choice \(p=q=r\) requires at most
\[
M_0,\ldots,M_{2r-1},
\]
which are recoverable from
\[
S_0,\ldots,S_{2r+1}.
\]
\end{corollary}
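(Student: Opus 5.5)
The plan is to work entirely with the minimal realization produced in the proof of Theorem~\ref{thm:minimal-completion}. Write \(\mathcal X\) for its state space, \(\dim\mathcal X=r\). Define the finite Krylov spaces \(\mathcal R_q=\operatorname{span}\{H_\star^kB_\star u:\ k<q,\ u\in\R^m\}\subseteq\mathcal X\). These are nested, so \(\dim\mathcal R_q\) is nondecreasing in \(q\) and bounded by \(r\). If \(\mathcal R_q=\mathcal R_{q+1}\), then \(\mathcal R_q\) is \(H_\star\)-invariant and the chain stops for good. Each strict increase raises the dimension by at least one, and \(\mathcal R_1\neq0\) unless \(M\equiv0\), which is the trivial case \(r=0\). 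Hence \(\mathcal R_r=\mathcal X\), i.e.\ \(\operatorname{rank}\mathcal C_r^\star=r\) for the controllability block of \((A_\star,H_\star,B_\star)\). The same argument, applied to the kernels of \(\mathcal O_p^\star\), shows that these kernels decrease and stabilize. Observability then gives \(\ker\mathcal O_r^\star=0\), so \(\operatorname{rank}\mathcal O_r^\star=r\).

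Next, invoke the factorization \eqref{eq:hankel-factor} for the starred triple, \(\mathbb H_{p,q}=\mathcal O_p^\star\mathcal C_q^\star\), which is valid because \(M_k=A_\star H_\star^kB_\star\). This gives \(\operatorname{rank}\mathbb H_{r,r}=r\) by Sylvester's inequality: the product of an injective map and a surjective map through \(\R^r\) has rank \(r\). The bound \(\operatorname{rank}\mathbb H_{p,q}\le r\) for all \(p,q\) shows that the rank has stabilized at \(p=q=r\). This proves the first claim; in fact the stabilization typically occurs already at \(p,q\le r\) with possibly smaller values.

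For the construction from finite data, I would run the standard Ho--Kalman step. Factor \(\mathbb H_{r,r}=\mathcal O\,\mathcal C\) with \(\mathcal O\in\R^{rm\times r}\) of full column rank and \(\mathcal C\in\R^{r\times rm}\) of full row rank, for instance via an SVD. Set \(B_\star\) to the first block column of \(\mathcal C\) and \(A_\star\) to the first block row of \(\mathcal O\). Then define \(H_\star=\mathcal O^{+}\,\mathbb H^{\uparrow}_{r,r}\,\mathcal C^{+}\), where the shifted Hankel block is \(\mathbb H^{\uparrow}_{r,r}=(M_{i+j+1})_{0\le i,j<r}\). By uniqueness up to similarity (Theorem~\ref{thm:minimal-completion}(iii)), this factorization equals \(\mathcal O_r^\star S^{-1}\) and \(S\mathcal C_r^\star\) for some invertible \(S\). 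Since \(\mathbb H^{\uparrow}_{r,r}=\mathcal O_r^\star H_\star\mathcal C_r^\star\), the pseudo-inverse formula returns \(SH_\star S^{-1}\). The recovered triple is therefore similar to a minimal realization, reproduces all \(M_k\), and hence reproduces \(T_t\) through \eqref{eq:resolvent}.

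The shifted block involves \(M_1,\dots,M_{2r-1}\), so the data used are exactly \(M_0,\dots,M_{2r-1}\). By the tomography formula \eqref{eq:tomography}, \(M_{n-1}\) requires \(S_0,\dots,S_{n+1}\), and \(K=S_1\). Taking \(n=2r\) therefore shows that \(S_0,\dots,S_{2r+1}\) suffice.

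The main obstacle is the step identifying the finite factorization with the infinite minimal realization, that is, ensuring that no moment beyond \(M_{2r-1}\) carries new information. I would address it through Cayley--Hamilton on the \(r\times r\) matrix \(H_\star\). This shows that every later block column of the infinite Hankel operator is a combination of the first \(r\), so the finite data determine \(\mathbb H_\infty\).
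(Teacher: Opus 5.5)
Your proof is correct, but it reaches the conclusion by a different route from the paper's.

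\textbf{The paper's route.} The supporting argument is in the appendix on the Ho--Kalman construction, and it works intrinsically on the finite data. It assumes a block with $\rank\mathbb H_{p,q}=\rank\mathbb H_{p+1,q}=\rank\mathbb H_{p,q+1}=r$. Because an appended block row does not raise the rank, it defines a shift $H_\star$ with $\mathbb H^+_{p,q}=\mathcal O H_\star\mathcal C$. It then obtains $M_k=A_\star H_\star^kB_\star$ for the available moments by induction, and extends this to all $k$ through the characteristic recurrence forced by finite rank. The claim that some $p,q\le r$ already stabilizes is asserted rather than proved.

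\textbf{Your route.} You start from the global minimal realization of Theorem~\ref{thm:minimal-completion}. Your Krylov-chain count shows that $\mathcal O_r^\star$ and $\mathcal C_r^\star$ have rank $r$: reachable subspaces grow strictly until they stop, and unobservable kernels shrink strictly until they stop. This is exactly the missing justification that $p=q=r$ suffices. You then identify any rank factorization of $\mathbb H_{r,r}$ with $(\mathcal O_r^\star S^{-1},S\mathcal C_r^\star)$. The pseudo-inverse formula therefore returns a similarity transform of a triple already known to reproduce every $M_k$. So the extension beyond $M_{2r-1}$ is automatic, and your closing Cayley--Hamilton paragraph is redundant in your own argument; it is the step the paper's intrinsic route needs.

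\textbf{What each buys.} The paper's argument certifies the construction from the blocks themselves, which is the algorithm one actually runs. Yours is shorter and proves the dimension bound $p,q\le r$, but it presupposes the abstract infinite-Hankel realization.

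\textbf{One attribution to fix.} The identification step is not an application of Theorem~\ref{thm:minimal-completion}(iii). That result compares realizations, not arbitrary factor pairs that are not yet known to be observability or controllability matrices. What you need is the elementary uniqueness of rank factorizations. Both $\mathcal O$ and $\mathcal O_r^\star$ have column space $\im\mathbb H_{r,r}$, so $\mathcal O=\mathcal O_r^\star T$ with $T$ invertible. Injectivity of $\mathcal O_r^\star$ then gives $T\mathcal C=\mathcal C_r^\star$, and $S=T^{-1}$.
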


For a stabilized finite block, choose a rank factorization
\[
\mathbb H_{p,q}=\mathcal O\mathcal C
\]
and let \(\mathbb H^+_{p,q}=(M_{i+j+1})\) be its one-step shift.  In any full-rank coordinates, the hidden generator is obtained by the Ho--Kalman formula
\begin{equation}
H_\star=\mathcal O^\dagger\mathbb H^+_{p,q}\mathcal C^\dagger,
\label{eq:hokalman}
\end{equation}
while \(A_\star\) and \(B_\star\) are the first block row and column of \(\mathcal O\) and \(\mathcal C\).

\begin{corollary}[Closure dichotomy]
\label{cor:closure}
The following are equivalent:
\begin{enumerate}[label=(\roman*)]
\item \(r_{\mathrm{cl}}=0\);
\item \(\cM(t)\equiv0\);
\item \(T_t=\e^{tK}\) for all \(t\ge0\);
\item \(T_{s+t}=T_sT_t\) for all \(s,t\ge0\).
\end{enumerate}
\end{corollary}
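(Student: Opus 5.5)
The plan is to prove the cycle of implications (i)\(\Rightarrow\)(ii)\(\Rightarrow\)(iii)\(\Rightarrow\)(iv)\(\Rightarrow\)(ii)\(\Rightarrow\)(i), using only Theorems~\ref{thm:defect}, \ref{thm:memory} and \ref{thm:tomography} together with analyticity.

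\emph{(i)\(\Leftrightarrow\)(ii).} If \(r_{\mathrm{cl}}=0\), then every entry of \(\mathbb H_\infty\) vanishes. Hence \(M_k=0\) for all \(k\). Because \(\cM(t)=A\e^{tH}B\) is entire, its Taylor series at \(0\) is \(\sum_k M_kt^k/k!\), so \(\cM\equiv0\). Conversely, \(\cM\equiv0\) forces all jets \(M_k=0\), so \(\mathbb H_\infty=0\) and its rank is zero.

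\emph{(ii)\(\Rightarrow\)(iii).} With \(\cM\equiv0\), the Volterra equation~\eqref{eq:volterra} reduces to the linear ODE \(\dot T_t=KT_t\) with \(T_0=I_m\). Uniqueness of solutions gives \(T_t=\e^{tK}\). Equivalently, the resolvent~\eqref{eq:resolvent} becomes \((zI_m-K)^{-1}\).

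\emph{(iii)\(\Rightarrow\)(iv)} is immediate from the exponential law.

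\emph{(iv)\(\Rightarrow\)(ii)} is the only step with content. By Theorem~\ref{thm:defect}, the semigroup law is equivalent to
\[
C\e^{sL}Q\e^{tL}R=0\quad\text{for all } s,t\ge0 .
\]
This expression is jointly analytic in \(s\) and \(t\). Differentiating once in \(s\) and once in \(t\) at \(s=0\) gives \(CLQ\,\e^{tL}R=0\), after which I take the \(t\)-derivatives at \(t=0\). A cleaner route is to argue directly on coarse data. The semigroup law together with \(T_0=I\) and \(\dot T_0=K\) gives \(T_t=\e^{tK}\): fixing \(s\) and differentiating \(T_{s+t}=T_sT_t\) in \(t\) at \(t=0\) yields \(\dot T_s=T_sK\), whose solution is \(\e^{sK}\). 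Therefore \(S_n=K^n\) for all \(n\).

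Substituting \(S_n=K^n\) into the tomography recursion~\eqref{eq:tomography}, induction on \(n\) gives \(M_{n-1}=0\). The base case is \(M_0=S_2-K^2=0\). For the inductive step, if \(M_0=\dots=M_{n-2}=0\), then
\[
M_{n-1}=S_{n+1}-KS_n=K^{n+1}-K^{n+1}=0 .
\]
All jets vanishing then gives \(\cM\equiv0\) by analyticity, as in the first step.

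The main point requiring care is this last implication: one must justify that the semigroup law pins down \(T_t=\e^{tK}\) rather than some other semigroup. This holds because \(T\) is smooth and \(\dot T_0=K\), which follows directly from \(T_t=C\e^{tL}R\). The identity principle for the analytic kernel is also needed, but it is routine since \(\cM(t)=A\e^{tH}B\) is entire.
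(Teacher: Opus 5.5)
Your proof is correct, and its skeleton matches the paper's:
\begin{itemize}
\item (i)$\Leftrightarrow$(ii) follows from the vanishing of every jet $M_k$ together with analyticity of $A\e^{tH}B$;
\item (ii)$\Rightarrow$(iii) follows by collapsing the Volterra equation to $\dot T=KT$;
\item the passage between (iii) and (iv) follows by differentiating the semigroup law and using $\dot T_0=K$.
\end{itemize}

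The one genuine difference is the converse step back to a vanishing kernel. The paper argues in the Laplace domain. If $T_t=\e^{tK}$, then $\widehat T(z)=(zI_m-K)^{-1}$, and comparison with Eq.~\eqref{eq:resolvent} forces the memory ``self-energy'' $A(zI-H)^{-1}B$ to vanish for large $\operatorname{Re}z$, hence $\cM\equiv0$. You instead stay in the time domain. You note that $T_t=\e^{tK}$ means $S_n=K^n$, and feed this into the recursion of Theorem~\ref{thm:tomography} to get $M_{n-1}=K^{n+1}-K\cdot K^n=0$ inductively.

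The two arguments encode the same identity, since expanding the resolvent identity in powers of $1/z$ reproduces the recursion. Your route avoids appealing to invertibility of the resolvents and injectivity of the Laplace transform. It also has a pleasant operational reading: the tomography protocol, applied to a coarse law that composes exactly, returns identically zero memory jets. The paper's version is shorter, and it displays memory as precisely the term obstructing $\widehat T(z)=(zI_m-K)^{-1}$.

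One small slip in the aside you abandoned: a single $s$-derivative of $C\e^{sL}Q\e^{tL}R$ at $s=0$ already gives $CLQ\e^{tL}R$. Adding a $t$-derivative would instead give $CLQL\e^{tL}R$. Since you switch to the coarse-data route, this does not affect your argument.
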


\begin{proof}
Zero Hankel rank is equivalent to \(M_k=0\) for every \(k\), hence to the vanishing analytic kernel.  The Volterra equation then reduces to \(\dot T=KT\).  Conversely, an exact semigroup has no memory self-energy in Eq.~\eqref{eq:resolvent}.  The last equivalence follows from differentiability and \(\dot T_0=K\).
\end{proof}

\subsection{Depth and rank are distinct}

Define the memory depth
\begin{equation}
d_\star=\min\{k\ge0:M_k\neq0\},
\label{eq:depth}
\end{equation}
with \(d_\star=\infty\) under exact closure.  Depth is a temporal invariant: it counts how many hidden generator insertions are required before unresolved information returns.  Closure rank is a dimensional invariant: it counts independent hidden predictive modes.  Neither determines the other.

The distinction is analogous to delay versus state dimension in realization theory.  A single mode can return immediately; many modes can cancel through several derivatives before becoming visible.

\section{Information geometry of the first return}

Let \(f:(0,\infty)\to\R\) satisfy
\[
f(1)=f'(1)=0,\qquad f''(1)>0,
\]
and define
\[
D_f(p\|q)=\sum_iq_i f(p_i/q_i).
\]
For an interior probability vector \(q\), the Fisher tangent norm is
\[
\norm{v}_{F(q)}^2=\sum_i\frac{v_i^2}{q_i}.
\]

\begin{lemma}[Local divergence Hessian]
\label{lem:fisher}
If \(p(t)\) and \(r(t)\) are probability curves with \(p(0)=r(0)=q\) and
\[
p(t)-r(t)=\frac{t^\ell}{\ell!}v+O(t^{\ell+1}),
\]
then
\begin{equation}
D_f(p(t)\|r(t))
=
\frac{f''(1)}{2(\ell!)^2}
t^{2\ell}\norm{v}_{F(q)}^2
+O(t^{2\ell+1}).
\label{eq:local-f}
\end{equation}
\end{lemma}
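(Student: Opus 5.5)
The plan is a second-order Taylor expansion of \(f\) around \(1\), applied entrywise to the likelihood ratio \(p_i(t)/r_i(t)\), combined with control of the remainder uniformly in \(t\). Since \(q\) is interior, every \(q_i>0\). By continuity, \(r_i(t)\ge q_i/2>0\) for all sufficiently small \(t\), so the ratios are defined and tend to \(1\). Assume \(f\) is \(C^3\) near \(1\), in line with the paper's ``smooth'' hypothesis. If only \(C^2\) is available, the same argument gives remainder \(o(t^{2\ell})\) rather than \(O(t^{2\ell+1})\).

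First write \(p_i(t)/r_i(t)=1+\delta_i(t)\) with \(\delta_i=(p_i-r_i)/r_i\). The hypothesis gives
\[
p_i-r_i=\frac{t^\ell}{\ell!}v_i+O(t^{\ell+1}),
\]
and the curves are differentiable at \(0\), so \(r_i(t)=q_i+O(t)\). Hence
\[
\delta_i(t)=\frac{t^\ell}{\ell!}\frac{v_i}{q_i}+O(t^{\ell+1}).
\]
Next use Taylor's theorem with \(f(1)=f'(1)=0\):
\[
f(1+\delta)=\tfrac12 f''(1)\delta^2+O(\delta^3).
\]
Multiplying by \(r_i\) and summing gives
\[
D_f(p\|r)=\tfrac12 f''(1)\sum_i r_i\delta_i^2+\sum_i r_i\,O(\delta_i^3).
\]

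It remains to expand each piece. Since \(\delta_i=O(t^\ell)\), the cubic remainder is \(O(t^{3\ell})\), which lies in \(O(t^{2\ell+1})\) because \(\ell\ge1\). For the main term,
\[
r_i\delta_i^2=\frac{(p_i-r_i)^2}{r_i}.
\]
Its numerator is \(\frac{t^{2\ell}}{(\ell!)^2}v_i^2+O(t^{2\ell+1})\), and \(1/r_i=1/q_i+O(t)\). Multiplying and summing over \(i\) yields
\[
\frac{t^{2\ell}}{(\ell!)^2}\sum_i\frac{v_i^2}{q_i}+O(t^{2\ell+1})=\frac{t^{2\ell}}{(\ell!)^2}\norm{v}_{F(q)}^2+O(t^{2\ell+1}),
\]
which gives Eq.~\eqref{eq:local-f}. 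Note that \(\one^\mathsf Tv=0\) automatically, because both curves are probability vectors, though this is not needed for the estimate.

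The main obstacle is bookkeeping rather than a real difficulty. The Taylor remainder must be controlled uniformly, and this works because \(\delta_i(t)\to0\) and \(f'''\) is bounded on a neighborhood of \(1\). One must also replace the moving base point \(r(t)\) by \(q\) inside the Fisher weight. This replacement costs only one extra power of \(t\), because the leading numerator is already of order \(t^{2\ell}\). The interiority of \(q\) is essential here: without it the weights \(1/r_i\) are unbounded and the expansion fails on the boundary of \(\Delta_n\).
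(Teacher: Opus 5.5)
Your proposal is correct and is essentially the paper's own argument. A second-order Taylor expansion of \(f\) at \(1\) reduces \(D_f(p\|r)\) to \(\tfrac12 f''(1)\sum_i (p_i-r_i)^2/r_i\) plus a cubic remainder, and substituting \(r_i=q_i+O(t)\) costs only one extra power of \(t\); you also state explicitly the assumptions the paper uses tacitly, namely interiority of \(q\), \(C^3\) regularity of \(f\) near \(1\) for the \(O(t^{2\ell+1})\) remainder, and \(r(t)=q+O(t)\).
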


\begin{proof}
Write \(\delta_i=p_i-r_i\).  Taylor expansion at equality gives
\[
D_f(p\|r)
=
\frac{f''(1)}2\sum_i\frac{\delta_i^2}{r_i}
+O(\norm{\delta}^3).
\]
Since \(r_i=q_i+O(t)\), substitution of the assumed displacement yields Eq.~\eqref{eq:local-f}.
\end{proof}

\begin{theorem}[Curvature-depth law]
\label{thm:depth-law}
If \(d_\star<\infty\), then
\begin{equation}
T_t-\e^{tK}
=
\frac{t^{d_\star+2}}{(d_\star+2)!}M_{d_\star}
+O(t^{d_\star+3}).
\label{eq:path-depth}
\end{equation}
For every interior \(q\notin\kerop M_{d_\star}\),
\begin{equation}
D_f(T_tq\|\e^{tK}q)
=
\frac{f''(1)t^{2d_\star+4}}
{2[(d_\star+2)!]^2}
\norm{M_{d_\star}q}_{F(q)}^2
+O(t^{2d_\star+5}).
\label{eq:depth-law}
\end{equation}
\end{theorem}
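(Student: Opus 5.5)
The plan is to compare Taylor coefficients of \(T_t\) and \(\e^{tK}\) at \(t=0\) using the recurrence of Theorem~\ref{thm:tomography}. Write \(S_n=T^{(n)}(0)\) and note that \(\e^{tK}\) has coefficients \(K^n\). First I will show by induction on \(n\) that
\[
S_n=K^n\qquad\text{for } 0\le n\le d_\star+1 ,
\]
and then compute the first discrepancy,
\[
S_{d_\star+2}-K^{d_\star+2}=M_{d_\star}.
\]
The base cases are \(S_0=I_m\) and \(S_1=K\). For \(1\le n\le d_\star\), the recurrence \eqref{eq:recurrence} reads
\[
S_{n+1}=KS_n+\sum_{j=0}^{n-1}M_jS_{n-1-j}.
\]
Every \(M_j\) appearing here has \(j\le n-1<d_\star\), so the sum vanishes and \(S_{n+1}=KS_n=K^{n+1}\).

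At \(n=d_\star+1\) the sum runs over \(j\le d_\star\), and only the term \(j=d_\star\) survives. That term is \(M_{d_\star}S_0=M_{d_\star}\), which gives
\[
S_{d_\star+2}=K^{d_\star+2}+M_{d_\star}.
\]
Since \(T_t=C\e^{tL}R\) is entire, Taylor's theorem with remainder then yields Eq.~\eqref{eq:path-depth}. The case \(d_\star=0\) is handled by the same computation: \(S_2=K^2+M_0\).

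For the divergence law I apply Lemma~\ref{lem:fisher} with \(p(t)=T_tq\), \(r(t)=\e^{tK}q\), \(\ell=d_\star+2\) and \(v=M_{d_\star}q\). Three hypotheses need checking.
\begin{itemize}[label={}]
\item \emph{Probability curves with common origin.} \(T_t\) is a composition of stochastic maps, and \(\e^{tK}\) is stochastic because \(K\) is a Markov generator by the earlier proposition. Both curves start at \(q\).
\item \emph{Leading displacement.} The required expansion \(p(t)-r(t)=\frac{t^\ell}{\ell!}v+O(t^{\ell+1})\) is exactly Eq.~\eqref{eq:path-depth} applied to \(q\).
\item \emph{Nondegeneracy.} \(v\neq0\) because \(q\notin\kerop M_{d_\star}\). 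Also \(\one^\mathsf T v=0\), since both curves conserve mass, so \(v\) is a genuine tangent vector.
\end{itemize}
Substituting into Eq.~\eqref{eq:local-f} gives the coefficient \(f''(1)/(2[(d_\star+2)!]^2)\) times \(\norm{M_{d_\star}q}_{F(q)}^2\), with error \(O(t^{2d_\star+5})\).

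The main obstacle is the uniformity hidden in the proof of Lemma~\ref{lem:fisher}. The cubic Taylor remainder of \(D_f\) must be controlled uniformly near \(p=r=q\). This requires \(q\) interior, so that \(r_i(t)\) stays bounded away from \(0\) for small \(t\), and \(f\) to be \(C^3\) near \(1\). Under these conditions \(O(\norm{\delta}^3)=O(t^{3\ell})\), which is subsumed by \(O(t^{2\ell+1})\).

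One further point needs care: replacing \(r_i(t)\) by \(q_i\) in the denominator produces an \(O(t)\) relative correction, hence an error of order \(t^{2\ell+1}\), which is also subsumed. I would spell these two estimates out explicitly rather than rely on the lemma's terse proof.
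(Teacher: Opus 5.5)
Your proof is correct and follows essentially the paper's route. The paper integrates the Volterra equation using the kernel's leading term \(\frac{\tau^{d_\star}}{d_\star!}M_{d_\star}\) and \(T_{t-\tau}=I+O(t)\); you do the same computation coefficientwise through the recurrence~\eqref{eq:recurrence}, and both arguments then finish with Lemma~\ref{lem:fisher}. Your explicit conditions (\(f\) of class \(C^3\) near \(1\), and \(q\) interior so \(r_i(t)\) stays bounded below) make precise what the lemma's proof tacitly assumes.
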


\begin{proof}
In Eq.~\eqref{eq:volterra}, all kernel derivatives below order \(d_\star\) vanish, so
\[
\cM(\tau)=\frac{\tau^{d_\star}}{d_\star!}M_{d_\star}
+O(\tau^{d_\star+1}).
\]
At leading order \(T_{t-\tau}=I+O(t)\).  Integrating once in the Volterra equation gives Eq.~\eqref{eq:path-depth}; Lemma~\ref{lem:fisher} gives Eq.~\eqref{eq:depth-law}.
\end{proof}

The theorem makes the earliest hidden return experimentally visible as an exponent.  The full Hankel spectrum then determines how many hidden modes are required to complete the dynamics.

\section{An exact curvature--completion ladder}

We now give three finite Markov chains whose memory depths are \(0,1,2\) and whose closure ranks are \(1,2,3\).  Every matrix entry and coefficient below is exact.

\subsection{\texorpdfstring{Immediate return: depth \(0\), rank \(1\)}{Immediate return: depth 0, rank 1}}

Let
\begin{align}
L_0&=
\begin{pmatrix}
-2&1&1\\
1&-3&1\\
1&2&-2
\end{pmatrix},\\
C_0&=
\begin{pmatrix}
1&1&0\\
0&0&1
\end{pmatrix},
\qquad
R_0=
\begin{pmatrix}
1/2&0\\
1/2&0\\
0&1
\end{pmatrix}.
\end{align}
Then
\[
K_0=
\begin{pmatrix}
-3/2&2\\
3/2&-2
\end{pmatrix},
\qquad
M_0^{(0)}=
\begin{pmatrix}
1/4&0\\
-1/4&0
\end{pmatrix},
\]
and the memory Hankel has rank \(1\).  For \(q=(3/5,2/5)^\mathsf T\),
\begin{equation}
\KL(T_tq\|\e^{tK_0}q)
=
\frac{3}{256}t^4+O(t^5).
\end{equation}

\subsection{\texorpdfstring{One hidden step: depth \(1\), rank \(2\)}{One hidden step: depth 1, rank 2}}

Let
\begin{equation}
L_1=
\begin{pmatrix}
-23/8&7/8&5/4&3/4\\
9/8&-25/8&3/4&5/4\\
11/8&13/8&-7/2&1/2\\
3/8&5/8&3/2&-5/2
\end{pmatrix}
\end{equation}
and partition the states into \(\{1,2\}\) and \(\{3,4\}\), with uniform reconstruction in each block.  Then
\[
K_1=
\begin{pmatrix}-2&2\\2&-2\end{pmatrix},
\qquad M_0^{(1)}=0,
\]
while
\[
M_1^{(1)}
=
\begin{pmatrix}
1/8&-1/8\\
-1/8&1/8
\end{pmatrix}.
\]
The closure rank is \(2\).  For \(q=(7/10,3/10)^\mathsf T\),
\begin{equation}
\KL(T_tq\|\e^{tK_1}q)
=
\frac{1}{6048}t^6+O(t^7).
\end{equation}

\subsection{\texorpdfstring{Two hidden steps: depth \(2\), rank \(3\)}{Two hidden steps: depth 2, rank 3}}

The depth-two example is
\begin{widetext}
\begin{equation}
L_2=
\begin{pmatrix}
-52/15&23/15&23/15&1/6&1/10\\
19/15&-59/15&16/15&17/30&1/2\\
6/5&7/5&-18/5&11/30&3/10\\
17/30&17/30&11/30&-61/20&41/20\\
13/30&13/30&19/30&39/20&-59/20
\end{pmatrix}.
\label{eq:L2}
\end{equation}
\end{widetext}
Take blocks \(\{1,2,3\}\) and \(\{4,5\}\), with uniform reconstruction.  Then
\[
K_2=
\begin{pmatrix}-1&1\\1&-1\end{pmatrix},
\qquad
M_0^{(2)}=M_1^{(2)}=0,
\]
and
\begin{equation}
M_2^{(2)}
=
\frac1{625}
\begin{pmatrix}1&-1\\-1&1\end{pmatrix}.
\end{equation}
The exact memory-Hankel rank is \(3\).  For \(q=(3/5,2/5)^\mathsf T\),
\begin{equation}
\KL(T_tq\|\e^{tK_2}q)
=
\frac{1}{2\,700\,000\,000}t^8+O(t^9).
\label{eq:t8}
\end{equation}

The three examples establish that the first-return exponent and the minimal completion dimension form a genuine two-coordinate classification.  They are not different normalizations of one scalar defect.

\begin{figure}
\includegraphics[width=\columnwidth]{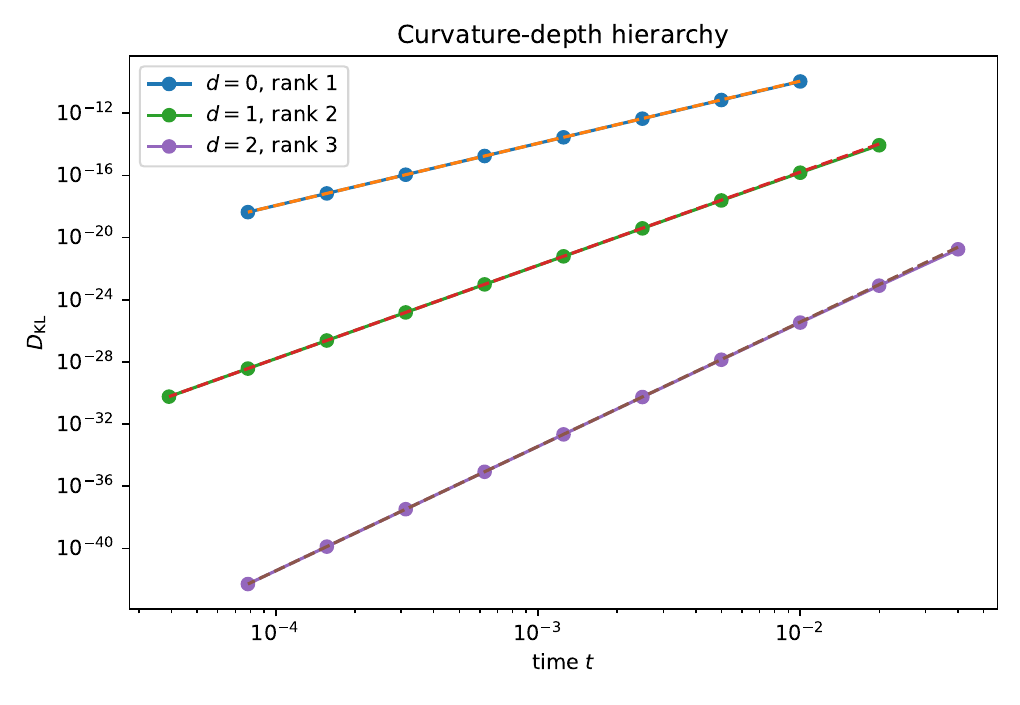}
\caption{High-precision Kullback--Leibler holonomy for the exact depth-\(0,1,2\) Markov examples.  The slopes approach \(4,6,8\), respectively, as predicted by Theorem~\ref{thm:depth-law}.}
\label{fig:depth}
\end{figure}

\begin{figure}
\includegraphics[width=\columnwidth]{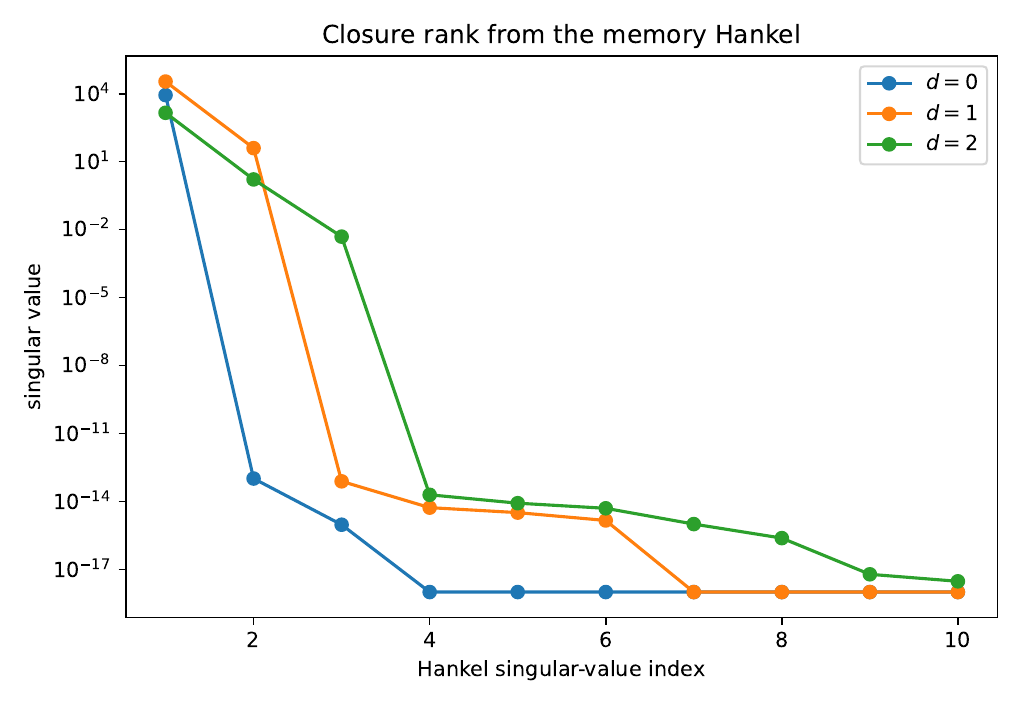}
\caption{Singular spectra of finite memory-Hankel blocks.  The exact ranks are \(1,2,3\); the remaining singular values are numerical roundoff.}
\label{fig:rank}
\end{figure}

\begin{figure}
\includegraphics[width=\columnwidth]{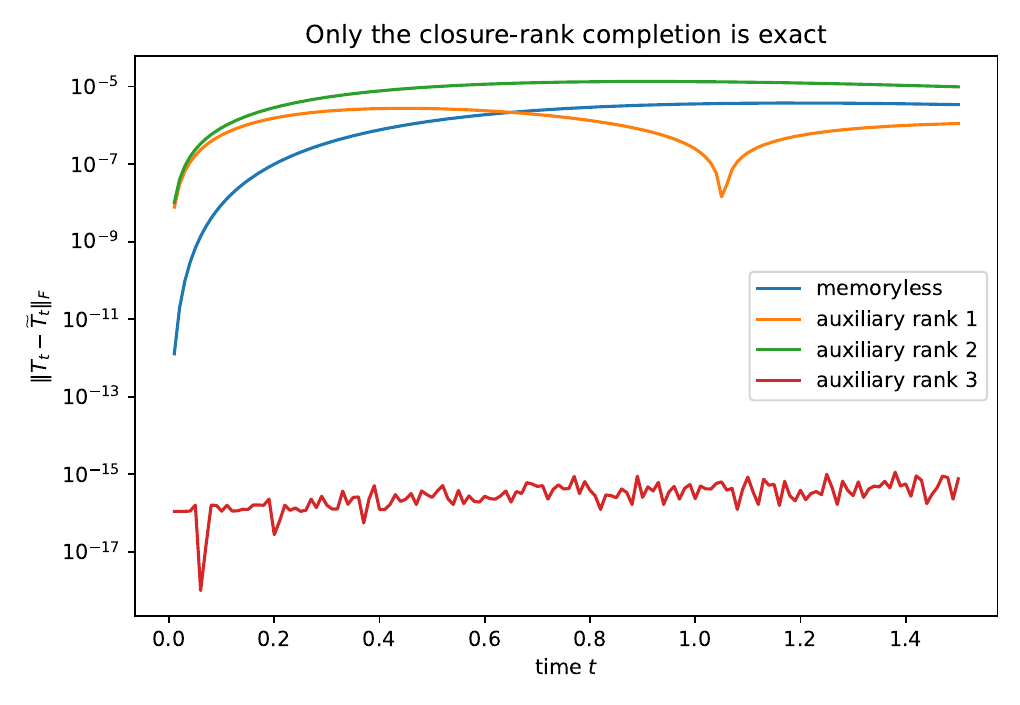}
\caption{Propagator error for truncated completions of the rank-three example.  Only the three-dimensional auxiliary realization reproduces the coarse propagator to machine precision for all tested times.}
\label{fig:completion}
\end{figure}

\section{From measured holonomy to hidden state}

Theorems~\ref{thm:tomography} and \ref{thm:minimal-completion} give a finite reconstruction protocol.

\begin{enumerate}[label=\arabic*.]
\item Prepare each macro basis state through the fixed section \(R\).
\item Estimate \(T_t\) for short times and infer derivatives \(S_0,\ldots,S_{2r+1}\).
\item Recover \(M_0,\ldots,M_{2r-1}\) recursively from Eq.~\eqref{eq:tomography}.
\item Form \(\mathbb H_{r,r}\) and increase \(r\) until its rank stabilizes under one extra block row and column.
\item Factor \(\mathbb H_{r,r}=\mathcal O\mathcal C\), form the shifted block, and compute \(H_\star\) from Eq.~\eqref{eq:hokalman}.
\item Read \(A_\star\) and \(B_\star\) from the first block row and column.
\end{enumerate}

The output is not a reconstruction of the microscopic coordinates.  It is the unique minimal hidden response class needed by the chosen coarse variables and preparation protocol.  Any microscopic degrees of freedom that are unreachable from \(B\) or unobservable through \(A\) disappear automatically.

This distinction matters.  The hidden subspace \(\im Q\) may be enormous, while the closure rank is small:
\begin{align}
r_{\mathrm{cl}}
&\le \dim\operatorname{span}\{H^k\im B:k\ge0\},\\
r_{\mathrm{cl}}
&\le \dim\operatorname{span}\{(H^\mathsf T)^k\im A^\mathsf T:k\ge0\},\\
r_{\mathrm{cl}}&\le\rank Q.
\end{align}
The completion captures only the controllable and observable part of the omitted dynamics.

\subsection{Noise and approximate rank}

With finite data, Hankel matrices have no exact numerical rank.  Their singular values provide a hierarchy of approximate closures.  Balanced truncation and Hankel-norm approximation offer natural error controls for stable linear systems \cite{Antoulas2005}.  In the present setting, the singular spectrum has a direct interpretation: it orders hidden modes by the strength with which coarse preparation can excite them and coarse readout can detect their return.

The rank is therefore an ideal invariant, while the singular spectrum is the experimentally relevant object.  The exact theorem supplies the target to which statistical estimators should converge.

\section{Positive realization and physical state}

The minimal completion in Theorem~\ref{thm:minimal-completion} is an ordinary linear realization.  Its hidden coordinates can have signed components, and the augmented matrix need not be a Markov generator.  A physical finite-state completion imposes additional constraints:
\begin{enumerate}[label=(\roman*)]
\item an invariant probability simplex;
\item nonnegative transition rates off the diagonal;
\item normalization-preserving readout and preparation;
\item, under intervention, admissible stochastic control maps.
\end{enumerate}

\begin{definition}[Positive completion dimension]
Let \(r_+\) be the minimum number of auxiliary probability coordinates in any finite Markov realization whose coarse response is \(T_t\).
\end{definition}
Whenever such a realization exists,
\begin{equation}
r_+\ge r_{\mathrm{cl}}.
\label{eq:positive-bound}
\end{equation}
The inequality can be strict: positivity can force a realization dimension larger than the ordinary McMillan degree \cite{FarinaRinaldi2000,BenvenutiFarina2004}.  Minimal positive realization remains substantially harder than unconstrained realization, and recent work continues to develop exact and optimization-based constructions \cite{TaghavianSjolund2025}.

This is not a defect of the theory but a sharp boundary between two questions.  The closure rank \(r_{\mathrm{cl}}\) asks how much linear predictive state is missing.  The positive dimension \(r_+\) asks how many latent probabilistic states realize that response physically.  The first is completely determined by coarse holonomy.  The second depends on the geometry of the positive cone.  A signed minimal realization is a predictive coordinate system; a positive realization is an ontological latent-state model.

Recent Markovian-embedding methods in open-system dynamics similarly enlarge a non-Markovian description by auxiliary modes \cite{XuStockburgerAnkerhold2026}.  The closure-rank theorem identifies the ordinary linear lower bound that any such exact embedding must meet.

\section{State as completed counterfactual law}

The canonical quotient of Theorem~\ref{thm:canonical-state} and the linear closure rank answer the same question at different levels.

The quotient asks which past distinctions alter any future intervention-conditioned law.  It gives the minimal exact state without assuming linearity, finite dimension, or a microscopic model.  The Hankel theorem asks how many linear auxiliary coordinates are required to reproduce the memory transfer function around a chosen finite-dimensional coarse representation.

This suggests a precise state-completion principle.

\begin{principle}[Predictive state completion]
A description is dynamically complete over an operational domain exactly when its present state is sufficient for every admissible future intervention and readout.  When it is incomplete, the physically relevant missing state is the minimal extension that restores this sufficiency; coordinate choices inside that extension have no independent operational meaning.
\end{principle}

The principle is not an optimization over vague ``information''.  Its exact version is a quotient construction; its linear version is a rank theorem.  It changes the role of memory: memory is not an alternative kind of law but the signature that the present state variable is incomplete.

Several consequences follow.

\paragraph{Emergence.}
A macrodescription is autonomous when its counterfactual quotient closes within the macrovariables.  Approximate emergence corresponds to a rapidly decaying Hankel spectrum, not merely a small one-step prediction error.

\paragraph{Objectivity.}
Two laboratories using different hidden coordinates reconstruct the same minimal response up to similarity if they agree on all coarse preparations and temporal responses.  The invariant content is the equivalence class of the completion, not a preferred latent basis.

\paragraph{Universality.}
Distinct microscopic systems can share the same completed coarse transfer function and hence the same minimal predictive state, even when their inaccessible microscopic dimensions differ.

\paragraph{Model discovery.}
A failure of semigroup composition is actionable: its temporal derivatives determine which auxiliary dynamics must be introduced.  The theory supplies a constructive route from a falsified effective law to its minimal repair.

\section{Relation to prior frameworks}

\subsection{Lumpability and quantitative coarse-graining}

Strong lumpability requires an intertwining relation of the form
\[
CL=KC,
\]
which makes the macro law independent of within-block preparation \cite{KemenySnell1960,Buchholz1994}.  The present framework is weaker and more diagnostic.  It fixes a reproducible section \(R\), permits nonzero memory, and reconstructs the exact state needed to close that prepared dynamics.  Quantitative coarse-graining studies bound the error of an effective Markov law \cite{Andrieux2012,HilderSharma2024}; closure rank instead identifies the dimension of an exact linear repair.

\subsection{Mori--Zwanzig projection}

Projection methods derive the memory kernel once the fine generator and projection are known \cite{Nakajima1958,Zwanzig1961,Mori1965}.  Here the kernel jets are recovered from coarse temporal response alone, and their Hankel rank is promoted to a minimal-state invariant.  The resulting completion removes the memory term by reinstating precisely its controllable and observable hidden realization.

\subsection{Predictive-state representations}

PSRs, observable operator models, and weighted automata represent state through future test predictions and are governed by system-dynamics or Hankel matrices \cite{Jaeger2000,LittmanSuttonSingh2001,SinghJamesRudary2004,ThonJaeger2015}.  The present result starts from a continuous-time coarse semigroup, derives its memory sequence, and shows that the Hankel rank of that sequence is exactly the missing-state dimension.  It is therefore a local continuous-time completion theorem rather than a competing state formalism.

\subsection{Compressed semigroups}

For self-adjoint semigroups and orthogonal projections, compressed-semigroup defects admit positive covariance-kernel factorizations \cite{Tian2026}.  Equation~\eqref{eq:defect} holds for general finite Markov generators and nonorthogonal stochastic sections.  The additional result is not the existence of a defect but its complete conversion into an experimentally reconstructible minimal state.

\section{Falsifiable consequences}

The mathematical theorems are exact.  Their physical interpretation yields direct tests.

\paragraph{Exponent test.}
For a reset-conditioned coarse process, identify the first nonzero recovered memory jet \(M_d\).  The short-time divergence must scale as \(t^{2d+4}\) for generic interior preparations.  A different exponent under the stated smoothness and preparation conditions would falsify the derivation.

\paragraph{Rank test.}
Construct increasing empirical memory-Hankel blocks.  Once the singular spectrum exhibits a stable gap at rank \(r\), an \(r\)-dimensional auxiliary realization should predict finite-time coarse propagators beyond the fitting window.  Every lower-rank exact model must fail.

\paragraph{Intervention test.}
For a controlled system, histories assigned to the same proposed state must agree on all held-out intervention-conditioned future distributions.  A reproducible difference proves that the representation is not a complete state.

\paragraph{Positive-gap test.}
If a linearly minimal completion cannot be made positive in the same dimension, a latent Markov model fitted with \(r_{\mathrm{cl}}\) hidden coordinates must either violate positivity or fail to reproduce the response.  The smallest successful positive model estimates \(r_+\).

These tests distinguish the theory from a purely interpretive claim.  They measure exponent, rank, transfer, and positivity separately.

\section{Discussion}

The central result can be stated without geometric language.  A coarse propagator contains, in its failure to compose, all information required to reconstruct the smallest hidden linear state that makes it compose.  Predictive holonomy is therefore a boundary observable of state incompleteness.

The result closes a loop that is usually left open: choose coarse variables, observe memory, infer the hidden response, and restore an autonomous state.  The completion is canonical up to similarity and can be obtained from finite short-time data when its rank is finite.  It does not reconstruct irrelevant microscopic detail; it reconstructs exactly the part of the omitted dynamics that can be excited and later observed through the coarse interface.

Three limitations are structural rather than cosmetic.  First, noisy derivative estimation is ill conditioned and requires regularization.  Second, nonlinear or infinite-rank predictive state can exceed finite linear realization.  Third, positivity is an additional geometric constraint, so closure rank is a lower bound rather than a full solution of minimal Markov realization.

These limitations define the next problems.  The statistical problem is to estimate closure rank and a stable completion with finite-sample guarantees.  The nonlinear problem is to replace the linear Hankel operator by an appropriate operator on intervention-conditioned predictive features.  The positive problem is to characterize when the minimal linear completion admits a simplex realization without dimension inflation.

The conceptual conclusion is sharper.  A state is not whatever variables happen to appear in a microscopic equation.  Relative to an operational domain, it is the minimal present distinction that renders every future counterfactual independent of the past.  When a chosen description lacks that property, its memory kernel is the measurable shadow of the missing state, and its Hankel rank counts the exact linear dimension required to complete it.

\section*{noise-stable minimal-completion certificate}

The exact closure rank can be inferred from finite data only when the
memory Hankel operator has a resolved singular gap.  This requirement can
be stated without selecting a realization algorithm.

\begin{theorem}[Robust closure-rank certification]
\label{thm:robust-closure-rank}
Let \(\mathbb H_{p,q}\) be a finite memory-Hankel block and let
\(\widehat{\mathbb H}_{p,q}\) obey
\[
 \|\widehat{\mathbb H}_{p,q}-\mathbb H_{p,q}\|_2\le\varepsilon.
\]
If \(\widehat\sigma_k>\varepsilon\), then every exact autonomous linear
completion has auxiliary dimension at least \(k\).  If the infinite
memory-Hankel operator has rank \(r\), the chosen finite block already has
rank \(r\), and
\[
 \sigma_r(\mathbb H_{p,q})>2\varepsilon,
\]
then every threshold in
\((\varepsilon,\sigma_r-\varepsilon)\) recovers the exact minimal
completion dimension.
\end{theorem}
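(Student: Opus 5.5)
The plan is to combine Weyl's perturbation inequality for singular values with the lower bound already established in Theorem~\ref{thm:minimal-completion}. For matrices of the same shape,
\[
|\sigma_j(\widehat{\mathbb H}_{p,q})-\sigma_j(\mathbb H_{p,q})|\le\|\widehat{\mathbb H}_{p,q}-\mathbb H_{p,q}\|_2\le\varepsilon
\]
for every \(j\). This follows from the Courant--Fischer min-max characterization applied to the Jordan--Wielandt dilation, or directly from \(\sigma_j(X+E)\le\sigma_j(X)+\|E\|_2\). No realization algorithm is needed; only the exact Hankel factorization and this inequality are used.

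For the first claim, suppose \(\widehat\sigma_k>\varepsilon\). Weyl's inequality gives \(\sigma_k(\mathbb H_{p,q})\ge\widehat\sigma_k-\varepsilon>0\), so \(\rank\mathbb H_{p,q}\ge k\). Now take any exact autonomous linear completion \eqref{eq:completion} of dimension \(r\). The lower-bound part of the proof of Theorem~\ref{thm:minimal-completion} shows that it realizes the same moments \(M_k=A_\star H_\star^kB_\star\). Hence the finite block factors as \(\mathbb H_{p,q}=\mathcal O_p^\star\mathcal C_q^\star\) through \(\R^r\), which forces \(r\ge\rank\mathbb H_{p,q}\ge k\). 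The finite block can also be viewed as a submatrix of \(\mathbb H_\infty\), giving \(r\ge r_{\mathrm{cl}}\ge\rank\mathbb H_{p,q}\) via part (ii) of that theorem.

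For the second claim, assume \(\rank\mathbb H_{p,q}=r=r_{\mathrm{cl}}\). Then \(\sigma_j(\mathbb H_{p,q})=0\) for \(j>r\), and Weyl's inequality gives \(\widehat\sigma_j\le\varepsilon\) for all \(j>r\). For \(j\le r\), Weyl's inequality and monotonicity of the singular values give
\[
\widehat\sigma_j\ge\sigma_j-\varepsilon\ge\sigma_r-\varepsilon.
\]
The hypothesis \(\sigma_r>2\varepsilon\) makes the interval \((\varepsilon,\sigma_r-\varepsilon)\) nonempty. Fix any threshold \(\tau\) in it. Then \(\widehat\sigma_j\ge\sigma_r-\varepsilon>\tau\) for \(j\le r\) and \(\widehat\sigma_j\le\varepsilon<\tau\) for \(j>r\), so the count \(\#\{j:\widehat\sigma_j>\tau\}\) equals \(r\) exactly. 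By Theorem~\ref{thm:minimal-completion}(ii), \(r=r_{\mathrm{cl}}\) is the minimal completion dimension.

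The main point needing care is interpretive rather than technical: the phrase ``recovers the exact minimal completion dimension'' must be read as the threshold count equalling \(r_{\mathrm{cl}}\). That identification uses the hypothesis that the finite block already has full rank \(r\), which plays the role of the stabilization condition in Corollary~\ref{cor:finite-tomography}. I would state this explicitly and note that the result depends on the noise only through the spectral norm bound \(\varepsilon\).
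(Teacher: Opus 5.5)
Your proof is correct and is the paper's argument. Weyl's inequality $|\widehat\sigma_j-\sigma_j|\le\varepsilon$ gives $\sigma_k>0$, and the moment factorization from Theorem~\ref{thm:minimal-completion} turns that into the lower bound on any completion dimension; the same inequality separates $\widehat\sigma_j\ge\sigma_r-\varepsilon$ for $j\le r$ from $\widehat\sigma_{r+1}\le\varepsilon$, which gives exact recovery at any threshold in the interval. You simply write out details the paper leaves implicit, and you should rename the letter $r$ that you use for an arbitrary completion's dimension, since $r$ already denotes the Hankel rank.
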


\begin{proof}
Weyl's inequality gives
\[
 |\widehat\sigma_j-\sigma_j|\le\varepsilon.
\]
Thus \(\widehat\sigma_k>\varepsilon\) forces
\(\sigma_k>0\), and the Hankel-rank minimality theorem gives the lower
bound.  Under the second hypothesis,
\(\widehat\sigma_r\ge\sigma_r-\varepsilon\) and
\(\widehat\sigma_{r+1}\le\varepsilon\), which proves exact recovery.
\end{proof}

If each recovered moment block has error at most \(\delta\) in operator
norm, a \(p\times q\) block Hankel has
\[
 \varepsilon\le\sqrt{pq}\,\delta.
\]
This follows by applying Cauchy--Schwarz first across the \(q\) input
blocks and then across the \(p\) output blocks.  The theorem therefore
turns derivative or finite-difference confidence intervals into a
certificate for the minimum hidden response dimension.  Modes below the
error radius remain non-identifiable, which is an information-theoretic
boundary rather than a failure of Ho--Kalman reconstruction.

\section*{finite-sample closure-rank guarantee}

The noise radius in the robust theorem can be obtained directly from
bounded observations.

\begin{theorem}[Entrywise sampling certificate]
\label{thm:frontier-holonomy-samples}
Let a scalar \(p\times q\) memory-Hankel block be estimated entrywise,
using \(N\) independent observations in \([0,1]\) for each entry.  With
probability at least \(1-\delta\),
\[
 \|\widehat{\mathbb H}-\mathbb H\|_2
 \le
 \sqrt{\frac{pq}{2N}\log\frac{2pq}{\delta}}
 =:\varepsilon_N.
\]
Consequently every \(\widehat\sigma_k>\varepsilon_N\) certifies at least
\(k\) hidden predictive coordinates.  If the true block has rank \(r\)
and smallest nonzero singular value \(\sigma_r\), then
\[
 N>
 \frac{2pq}{\sigma_r^2}\log\frac{2pq}{\delta}
\]
is sufficient for exact rank recovery by a threshold between
\(\varepsilon_N\) and \(\sigma_r-\varepsilon_N\).
\end{theorem}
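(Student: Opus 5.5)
The plan has three steps: an entrywise Hoeffding bound with a union bound, a conversion of the entrywise maximum into an operator-norm bound, and then an appeal to Theorem~\ref{thm:robust-closure-rank}.

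\emph{Entrywise concentration.} Each entry \(\widehat{\mathbb H}_{ij}\) is the empirical mean of \(N\) independent \([0,1]\)-valued observations. We take its mean to be \(\mathbb H_{ij}\); this unbiasedness is implicit in calling it an estimator of the block. Hoeffding's inequality then gives
\[
\Pr\bigl(|\widehat{\mathbb H}_{ij}-\mathbb H_{ij}|>s\bigr)\le 2\e^{-2Ns^2}.
\]
A union bound over the \(pq\) entries, with \(s=\sqrt{\log(2pq/\delta)/(2N)}\), shows that with probability at least \(1-\delta\) every entry error is at most \(s\).

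\emph{From entries to the operator norm.} On that event, bound the spectral norm by the Frobenius norm:
\[
\|\widehat{\mathbb H}-\mathbb H\|_2\le\|\widehat{\mathbb H}-\mathbb H\|_F\le\sqrt{pq}\,s=\sqrt{\frac{pq}{2N}\log\frac{2pq}{\delta}}=\varepsilon_N .
\]
This is the scalar case of the \(\sqrt{pq}\,\delta\) remark following Theorem~\ref{thm:robust-closure-rank}. No independence across entries is needed, since the union bound handles arbitrary dependence.

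\emph{Consequences via Theorem~\ref{thm:robust-closure-rank}.} With \(\varepsilon=\varepsilon_N\), the first part of that theorem directly gives the certificate: \(\widehat\sigma_k>\varepsilon_N\) implies that every exact completion has at least \(k\) hidden coordinates. For exact recovery, the second part requires \(\sigma_r>2\varepsilon_N\), which is equivalent to
\[
4\cdot\frac{pq}{2N}\log\frac{2pq}{\delta}<\sigma_r^2,
\qquad\text{i.e.}\qquad
N>\frac{2pq}{\sigma_r^2}\log\frac{2pq}{\delta}.
\]
This is exactly the stated sample size. Under it, any threshold in \((\varepsilon_N,\sigma_r-\varepsilon_N)\) returns \(r\), because Weyl's inequality gives \(\widehat\sigma_r\ge\sigma_r-\varepsilon_N\) and \(\widehat\sigma_{r+1}\le\varepsilon_N\).

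The main subtlety is interpretive rather than technical. The robust theorem's conclusion about the minimal completion dimension presumes that the finite block already attains the infinite Hankel rank, and that the sampled entries are unbiased for the Hankel entries. I would state both as standing assumptions inherited from Theorem~\ref{thm:robust-closure-rank}. For the lower-bound certificate alone, only unbiasedness is needed: \(\sigma_k(\mathbb H_{p,q})>0\) already implies \(r_{\mathrm{cl}}\ge k\), because finite-block rank is bounded by the infinite rank.
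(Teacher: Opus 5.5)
Your proposal is correct and follows the paper's proof essentially verbatim: Hoeffding plus a union bound over the \(pq\) entries, the Frobenius bound on the operator norm, and then Theorem~\ref{thm:robust-closure-rank} with the check that the sample-size condition is exactly \(2\varepsilon_N<\sigma_r\). Your added remarks make explicit two hypotheses the paper leaves implicit: that the entrywise estimates are unbiased, and that the finite block already attains the infinite Hankel rank.
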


\begin{proof}
Hoeffding's inequality and a union bound control every entry by
\(\sqrt{\log(2pq/\delta)/(2N)}\).  The operator norm is at most the
Frobenius norm, giving \(\varepsilon_N\).  The robust closure-rank theorem
then applies; the sample inequality is exactly \(2\varepsilon_N<\sigma_r\).
\end{proof}

This closes the finite-data linear completion problem with an explicit
confidence level.  Positive or nonlinear realizations may require more
structure, but cannot use fewer coordinates than this certified rank.

\section*{Structured sampling without entrywise dimension loss}

When one experimental shot estimates an entire Hankel contribution, matrix
concentration is sharper than treating all entries independently.

\begin{theorem}[Matrix-Bernstein closure certificate]
\label{thm:final-holonomy-matrix-bernstein}
Let \(Z_1,\ldots,Z_N\) be independent \(p\times q\) random matrices with
\(\mathbb EZ_s=\mathbb H\), and put \(X_s=Z_s-\mathbb H\).  Suppose
\(\|X_s\|_2\le L\) almost surely and define
\[
 v=\max\left\{
 \left\|\sum_{s=1}^N\mathbb E X_sX_s^*\right\|_2,
 \left\|\sum_{s=1}^N\mathbb E X_s^*X_s\right\|_2
 \right\}.
\]
Then, with probability at least \(1-\delta\),
\[
\begin{aligned}
 \left\|\frac1N\sum_{s=1}^NZ_s-\mathbb H\right\|_2
 &\le
 \frac{\sqrt{2v\log((p+q)/\delta)}}{N}\\
 &\quad+\frac{2L\log((p+q)/\delta)}{3N}
 =:\varepsilon_{\rm MB}.
\end{aligned}
\]
Every empirical singular value above \(\varepsilon_{\rm MB}\) certifies a
hidden predictive coordinate, and
\(2\varepsilon_{\rm MB}<\sigma_r(\mathbb H)\) guarantees exact recovery of
rank \(r\).
\end{theorem}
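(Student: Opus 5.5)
The plan is to reduce the statement to two ingredients already in hand: the rectangular matrix Bernstein inequality (Tropp's form) for a sum of independent, centered, bounded random matrices, and Theorem~\ref{thm:robust-closure-rank}, which turns any operator-norm error radius into a rank certificate. Write
\[
\frac1N\sum_{s=1}^N Z_s-\mathbb H=\frac1N\sum_{s=1}^N X_s ,
\]
where the \(X_s\) are independent, satisfy \(\mathbb E X_s=0\) and \(\|X_s\|_2\le L\), and the matrix variance statistic of \(\sum_s X_s\) is exactly \(v\) as defined in the statement. No Hermitian dilation is needed if we use the rectangular version directly. If one prefers to start from the Hermitian version, dilating each \(X_s\) to a \((p+q)\times(p+q)\) self-adjoint matrix preserves the norm bound \(L\). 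It also has variance \(\max\{\|\sum\mathbb E X_sX_s^*\|,\|\sum\mathbb E X_s^*X_s\|\}=v\), which explains the dimensional factor \(p+q\).

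The first step is to invoke the tail bound
\[
\Pr\Bigl(\Bigl\|\sum_s X_s\Bigr\|_2\ge u\Bigr)\le (p+q)\exp\!\Bigl(\frac{-u^2/2}{v+Lu/3}\Bigr).
\]
Set the right-hand side equal to \(\delta\) and write \(\ell=\log((p+q)/\delta)\). The threshold \(u\) then solves \(u^2=2v\ell+\tfrac23 L\ell u\). The next step is to solve this quadratic and bound its positive root by \(u_\star\le\sqrt{2v\ell}+\tfrac{2}{3}L\ell\), using \(\sqrt{a+b}\le\sqrt a+\sqrt b\); this is routine, and the stated constant \(2/3\) leaves some slack. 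Since the tail bound is monotone in \(u\), evaluating it at \(u=\sqrt{2v\ell}+\tfrac23L\ell\ge u_\star\) gives probability at most \(\delta\). Dividing by \(N\) yields \(\|\frac1N\sum Z_s-\mathbb H\|_2\le\varepsilon_{\rm MB}\) with probability at least \(1-\delta\).

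The final step is to apply Theorem~\ref{thm:robust-closure-rank} on this event with \(\varepsilon=\varepsilon_{\rm MB}\) and \(\widehat{\mathbb H}=\frac1N\sum Z_s\). Weyl's inequality then shows that any empirical \(\widehat\sigma_k>\varepsilon_{\rm MB}\) forces \(\sigma_k(\mathbb H)>0\). By Theorem~\ref{thm:minimal-completion}, every exact completion then has at least \(k\) auxiliary coordinates, so each such singular value certifies a hidden predictive coordinate. If \(2\varepsilon_{\rm MB}<\sigma_r(\mathbb H)\), the interval \((\varepsilon_{\rm MB},\sigma_r-\varepsilon_{\rm MB})\) is nonempty, and any threshold in it recovers \(r\) exactly. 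This uses the robust theorem's standing hypothesis that the finite block already carries the full rank \(r\).

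I expect the only real obstacle to be bookkeeping of normalization. The quantity \(v\) is defined for the unnormalized sum, so after dividing by \(N\) the leading term must read \(\sqrt{2v\ell}/N\) (not \(/\sqrt N\)), and the \(L\)-term picks up \(1/N\). I will check both carefully, and also check that the dilation variance coincides with \(v\) rather than with the sum of the two terms.
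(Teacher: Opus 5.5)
Your proposal is correct and takes the same route as the paper: you apply rectangular matrix Bernstein to the centered sum $\sum_s X_s$, then use Weyl's singular-value inequality through Theorem~\ref{thm:robust-closure-rank}. Your explicit inversion of the tail bound, with positive root $u_\star=\tfrac13 L\ell+\sqrt{\tfrac19 L^2\ell^2+2v\ell}\le\sqrt{2v\ell}+\tfrac23 L\ell$, and your check of the $1/N$ normalization fill in details the paper leaves implicit.
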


\begin{proof}
Rectangular matrix Bernstein gives the displayed operator-norm deviation
for the empirical average.  Weyl's singular-value perturbation inequality
then shifts every singular value by at most
\(\varepsilon_{\rm MB}\), exactly as in the robust closure-rank theorem.
\end{proof}

This certificate replaces the entrywise \(\sqrt{pq}\) loss by an intrinsic
matrix variance and applies directly to structured trajectory or response
samples.

\section{Conclusion}

We proved that the memory generated by finite Markov coarse-graining is completely recoverable from the temporal jet of the coarse propagator.  The block-Hankel rank of those recovered memory jets is exactly the minimum auxiliary dimension of an exact autonomous linear completion, and a minimal completion is constructible and unique up to similarity.  The first nonzero jet independently fixes the information-geometric exponent of path dependence.

The resulting hierarchy is:
\[
\boxed{
\begin{aligned}
\text{holonomy exponent}&\longleftrightarrow\text{first hidden return},\\
\text{Hankel spectrum}&\longleftrightarrow\text{strength of hidden modes},\\
\text{Hankel rank}&\longleftrightarrow\text{minimal missing state},\\
\text{positive realization}&\longleftrightarrow\text{latent probabilistic ontology}.
\end{aligned}}
\]
At the global level, the same logic gives the counterfactual predictive-state quotient.  Effective law is complete precisely when present state makes future intervention-conditioned prediction path independent.  Memory is the proof that it does not; minimal completion is the repair.

\begin{acknowledgments}
The numerical and symbolic verification scripts used to test the exact examples and realization identities accompany the source package.
\end{acknowledgments}

\appendix

\section{Proof details for the Ho--Kalman construction}

Let \(M_k\in\R^{m\times m}\) be a matrix sequence of finite Hankel rank \(r\).  Choose \(p,q\) such that
\[
\rank\mathbb H_{p,q}
=
\rank\mathbb H_{p+1,q}
=
\rank\mathbb H_{p,q+1}=r.
\]
Let \(\mathbb H_{p,q}=\mathcal O\mathcal C\) be any rank factorization.  Because appending one block row does not increase rank, there exists a unique linear map \(H_\star\) on the reachable-observable quotient that shifts block rows:
\[
\begin{pmatrix}
M_1&\cdots&M_q\\
\vdots&&\vdots\\
M_p&\cdots&M_{p+q-1}
\end{pmatrix}
=
\mathcal O H_\star\mathcal C.
\]
Equation~\eqref{eq:hokalman} is one coordinate representation of this map.  The first block row and column give \(A_\star\) and \(B_\star\).  Induction under the shift gives
\[
M_k=A_\star H_\star^kB_\star
\]
through all available moments.  Since finite rank makes the sequence satisfy the characteristic recurrence of \(H_\star\), equality extends to every \(k\).

For uniqueness, let \((A_i,H_i,B_i)\), \(i=1,2\), be controllable and observable realizations of dimension \(r\).  Define
\[
S\!\left(\sum_{k=0}^N H_1^kB_1u_k\right)
=
\sum_{k=0}^N H_2^kB_2u_k.
\]
If the first sum is zero, multiplication by \(A_1H_1^\ell\) for every \(\ell\) shows that all future outputs vanish.  Equality of moments implies the same for the second realization; observability then makes the second sum zero.  Hence \(S\) is well defined.  Controllability makes it surjective, equal finite dimensions make it invertible, and the definition gives
\[
SH_1=H_2S,\qquad SB_1=B_2,\qquad A_1=A_2S.
\]

\section{Exact relation between memory and composition defect}

The finite-time defect in Eq.~\eqref{eq:defect} and the Volterra kernel contain equivalent hidden-response information.  Differentiating the defect at \(s=0\) gives
\[
\left.\partial_s
(T_{s+t}-T_sT_t)\right|_{s=0}
=
CLQ\e^{tL}R.
\]
Resolving \(\e^{tL}R\) into \(P\)- and \(Q\)-components and applying variation of constants to the latter yields the kernel \(A\e^{tH}B\).  Conversely, the Volterra equation uniquely determines \(T_t\), hence its composition defect.  The local mixed derivative is
\[
\left.\partial_s\partial_t
(T_{s+t}-T_sT_t)\right|_{s=t=0}
=M_0.
\]
Higher mixed derivatives are polynomial combinations of \(K\) and the memory moments recovered by Theorem~\ref{thm:tomography}.

\section{Exact rational data for the depth-two example}

For the generator in Eq.~\eqref{eq:L2}, the coarse maps are
\[
C_2=
\begin{pmatrix}
1&1&1&0&0\\
0&0&0&1&1
\end{pmatrix},
\qquad
R_2=
\begin{pmatrix}
1/3&0\\
1/3&0\\
1/3&0\\
0&1/2\\
0&1/2
\end{pmatrix}.
\]
Direct rational algebra gives
\begin{align}
M_0&=0,\\
M_1&=0,\\
M_2&=\frac1{625}
\begin{pmatrix}1&-1\\-1&1\end{pmatrix},\\
M_3&=-\frac3{125}
\begin{pmatrix}1&-1\\-1&1\end{pmatrix},\\
M_4&=\frac6{25}
\begin{pmatrix}1&-1\\-1&1\end{pmatrix}.
\end{align}
Although each individual \(2\times2\) moment has rank one, the temporal Hankel rank is three because the scalar coefficient sequence has a third-order minimal recurrence.  For \(q=(3/5,2/5)^\mathsf T\),
\[
M_2q=\frac1{3125}\begin{pmatrix}1\\-1\end{pmatrix}
\]
and
\[
\norm{M_2q}_{F(q)}^2
=
\frac1{3\,750\,000}.
\]
Equation~\eqref{eq:depth-law} with \(d_\star=2\) gives
\[
\frac1{2(4!)^2}\frac1{3\,750\,000}
=
\frac1{2\,700\,000\,000},
\]
which proves Eq.~\eqref{eq:t8}.

\section{Verification protocol}

The accompanying verification suite performs the following checks.

\begin{enumerate}[label=(\roman*)]
\item It validates every generator and stochastic encoder--section pair.
\item It calibrates the Ho--Kalman implementation on a known two-dimensional minimal realization.
\item It reconstructs memory jets from coarse derivatives using Eq.~\eqref{eq:tomography}.
\item It computes exact rational ranks and coefficients with symbolic algebra.
\item It evaluates the three Kullback--Leibler asymptotics at up to \(150\)-digit precision.
\item It reconstructs the minimal completions and compares their propagators with \(C\e^{tL}R\).
\item It tests the exact semigroup-defect factorization.
\item It compares lower-rank truncations with the exact rank-three completion.
\end{enumerate}

The fitted short-time slopes are \(3.99874\), \(5.99877\), and \(7.99799\).  The relative errors of the smallest-time scaled coefficients are below \(6.1\times10^{-4}\).  Exact symbolic arithmetic returns closure ranks \(1,2,3\) and coefficients
\[
\frac3{256},\qquad
\frac1{6048},\qquad
\frac1{2\,700\,000\,000}.
\]
The minimal reconstructed completions reproduce the tested finite-time coarse propagators with maximum Frobenius errors below \(2.5\times10^{-15}\).  In a separate adversarial audit of \(50\) randomly generated five-state Markov generators, every generic two-block reduction had closure rank three; the maximum memory-tomography error was \(2.0\times10^{-13}\), and the maximum reconstructed-propagator error was \(3.5\times10^{-16}\).

\clearpage
\raggedbottom
\bibliographystyle{apsrev4-2}
\bibliography{references}

\end{document}